\documentclass[narrowdisplay]{amsart}

\usepackage{enumerate}

\usepackage{verbatim, amssymb,amsmath,amsthm,dsfont}
\usepackage{color,natbib,graphicx}

\theoremstyle{plain} 
\newtheorem{thm}{Theorem}
\newtheorem{lem}[thm]{Lemma} 

\newtheorem{cor}[thm]{Corollary}

\theoremstyle{definition} \newtheorem{defn}[thm]{Definition}
\newtheorem{rem}[thm]{Remark}

\newcommand{\PP}{\mathbb P}

\begin{document}

\title{There are no caterpillars in a wicked forest}

\date{\today}

\author{James H. Degnan}
\address{JHD: Department of Mathematics and Statistics, University of New Mexico
  Albuquerque, NM 87131, USA}
\email{jamdeg@unm.edu}

\author{John A. Rhodes}
 \address{JAR: Department of Mathematics and Statistics,
University of Alaska Fairbanks,
  PO Box 756660,
  Fairbanks, AK 99775,
 USA}
\email{j.rhodes@alaska.edu}

\maketitle

\bibliographystyle{natbib}

\begin{abstract}
Species trees represent the historical divergences of populations or species, while gene trees trace the ancestry of individual gene copies sampled within those populations. In cases involving rapid speciation, gene trees with topologies that differ from that of the species tree can be most probable under the standard multispecies coalescent model, making species tree inference more difficult.
Such {\it anomalous gene trees} are not well understood except for some small cases.  In this work, we establish one constraint that applies to trees of any size: gene trees with ``caterpillar" topologies cannot be anomalous.  The proof of this involves a new combinatorial object, called a {\it population history}, which keeps track of the number of coalescent events in each ancestral population.  \end{abstract}

Keywords: gene tree, species tree, multispecies coalescent, anomalous gene tree, coalescent history,  phylogeny

\section{Introduction}

An important distinction is made in phylogenetics between species trees and gene trees.  Species trees describe the ancestral relationships between populations of individuals (each carrying many genes) that have undergone divergences at various times in the past.  A gene tree tracks the ancestral relationships for a single gene sampled from individuals within extant species populations.  
In a species tree, the ancestral populations associated to edges have finite durations (see Figure \ref{fig:fig1}).  As a result, going backwards in time, several gene lineages from sampled individuals may remain distinct within a common ancestral population --- a phenomenon called \emph{incomplete lineage sorting} \citep{maddison1997} --- and then merge with other lineages to form a gene tree that is topologically dissimilar to the species tree. An understanding of this phenomenon, which
leads us to expect some, and possibly many, gene trees to differ from the species tree, is essential to statistical approaches to inference of species trees from genomic data sets.

\begin{figure*}
\begin{center}
\vspace{1in}
\includegraphics[width=.48\textwidth]{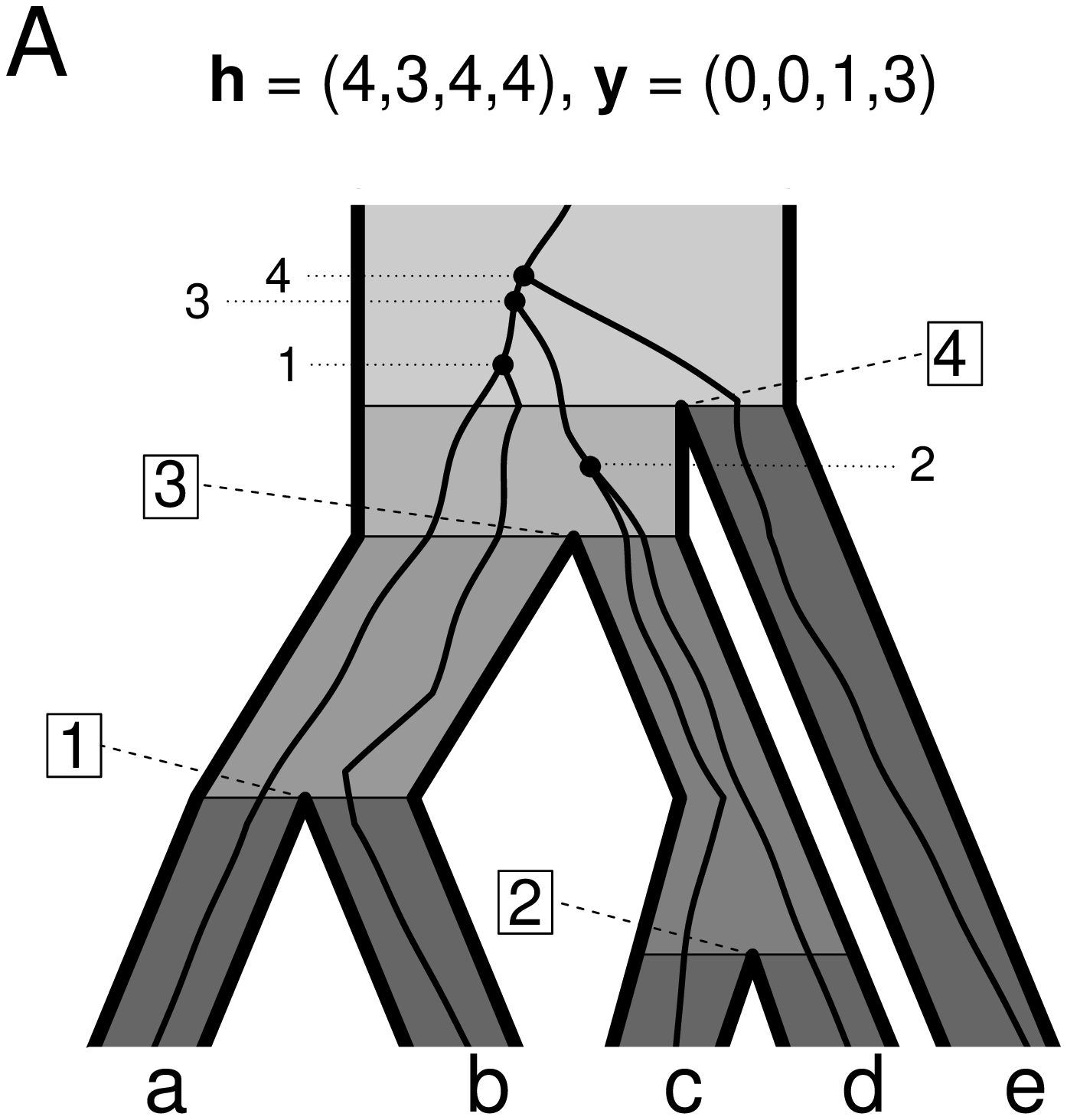}
\hspace{-1cm}\includegraphics[width=.48\textwidth]{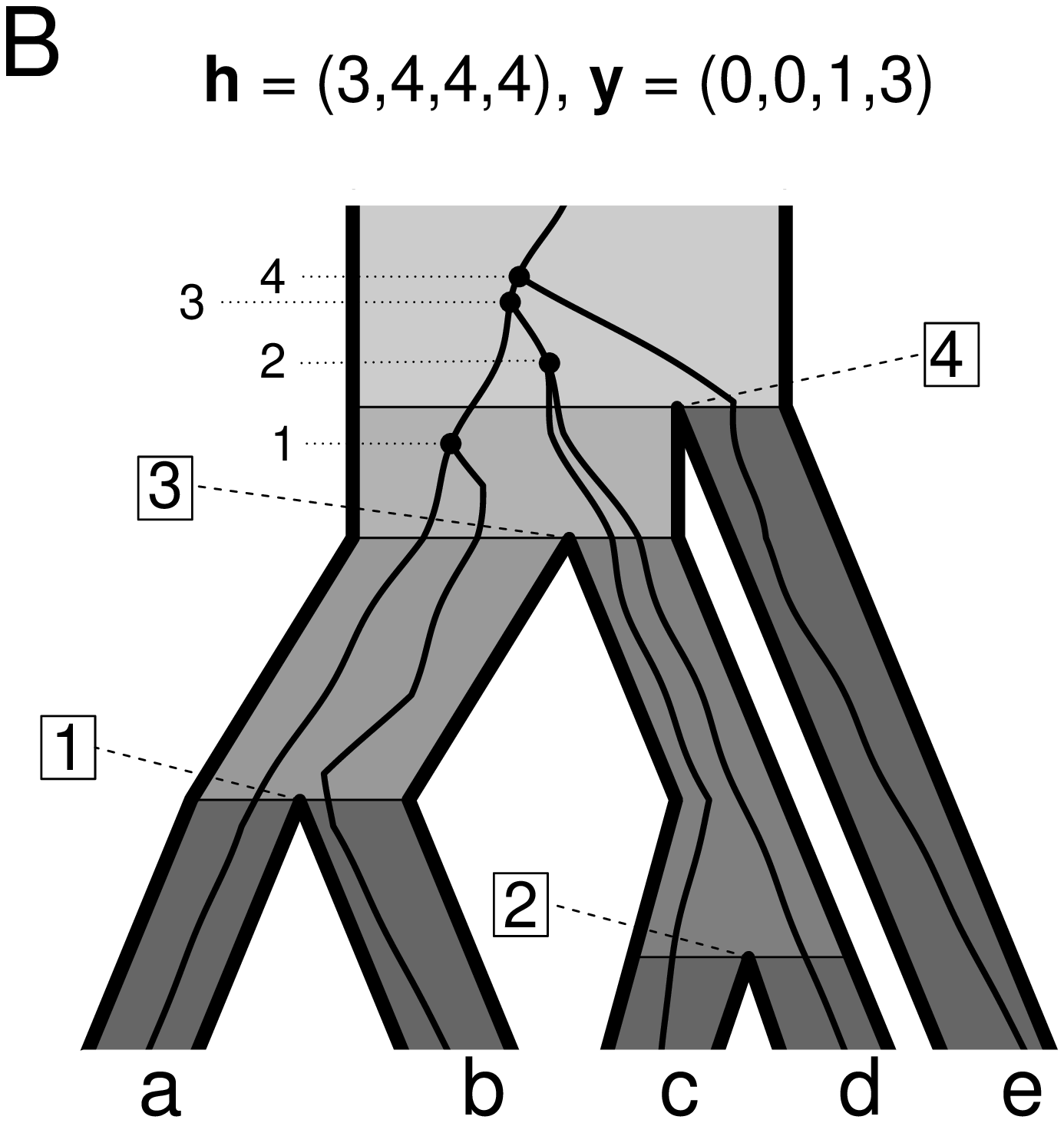}
\includegraphics[width=.48\textwidth]{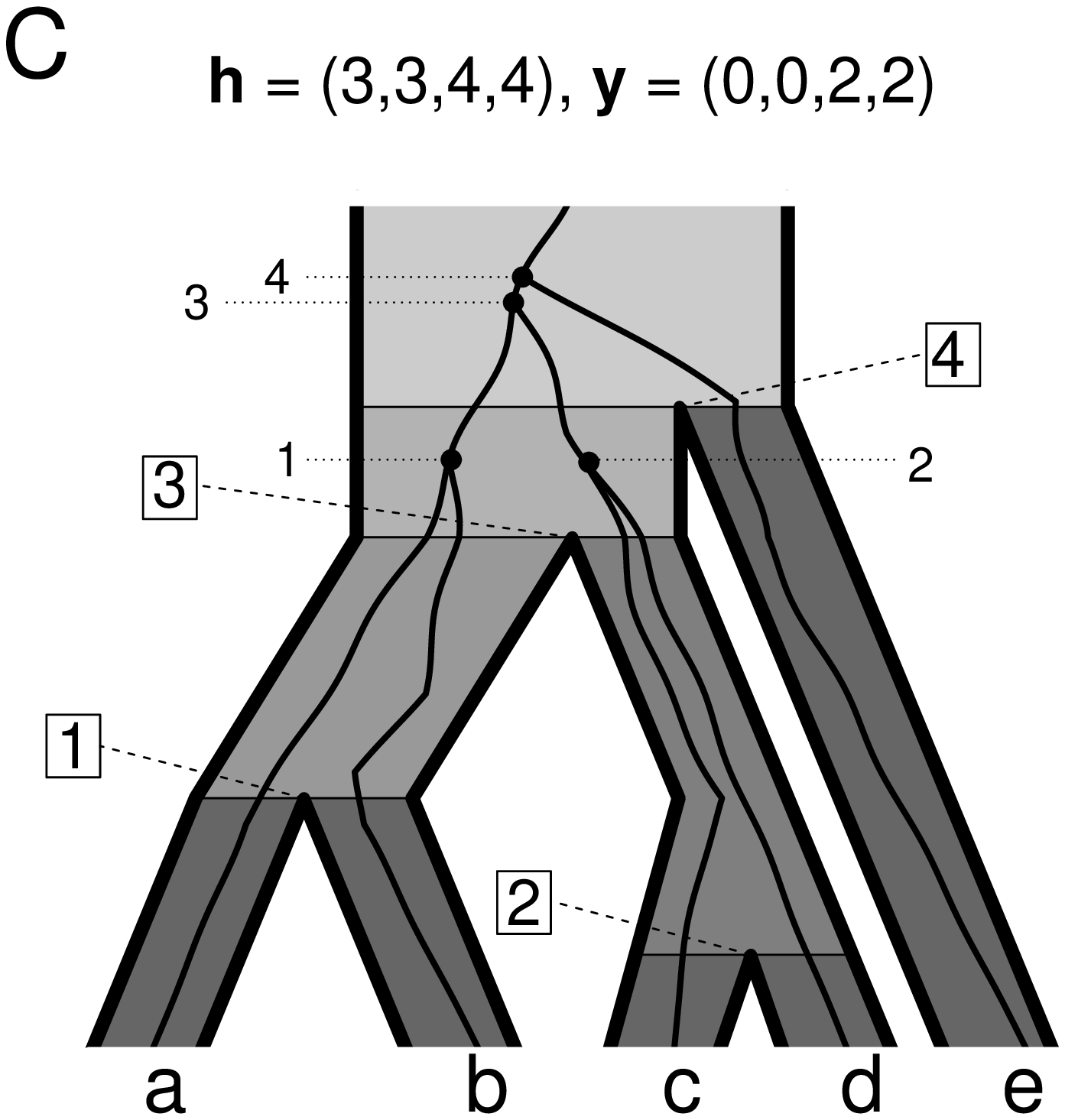}
\hspace{-1cm}\includegraphics[width=.48\textwidth]{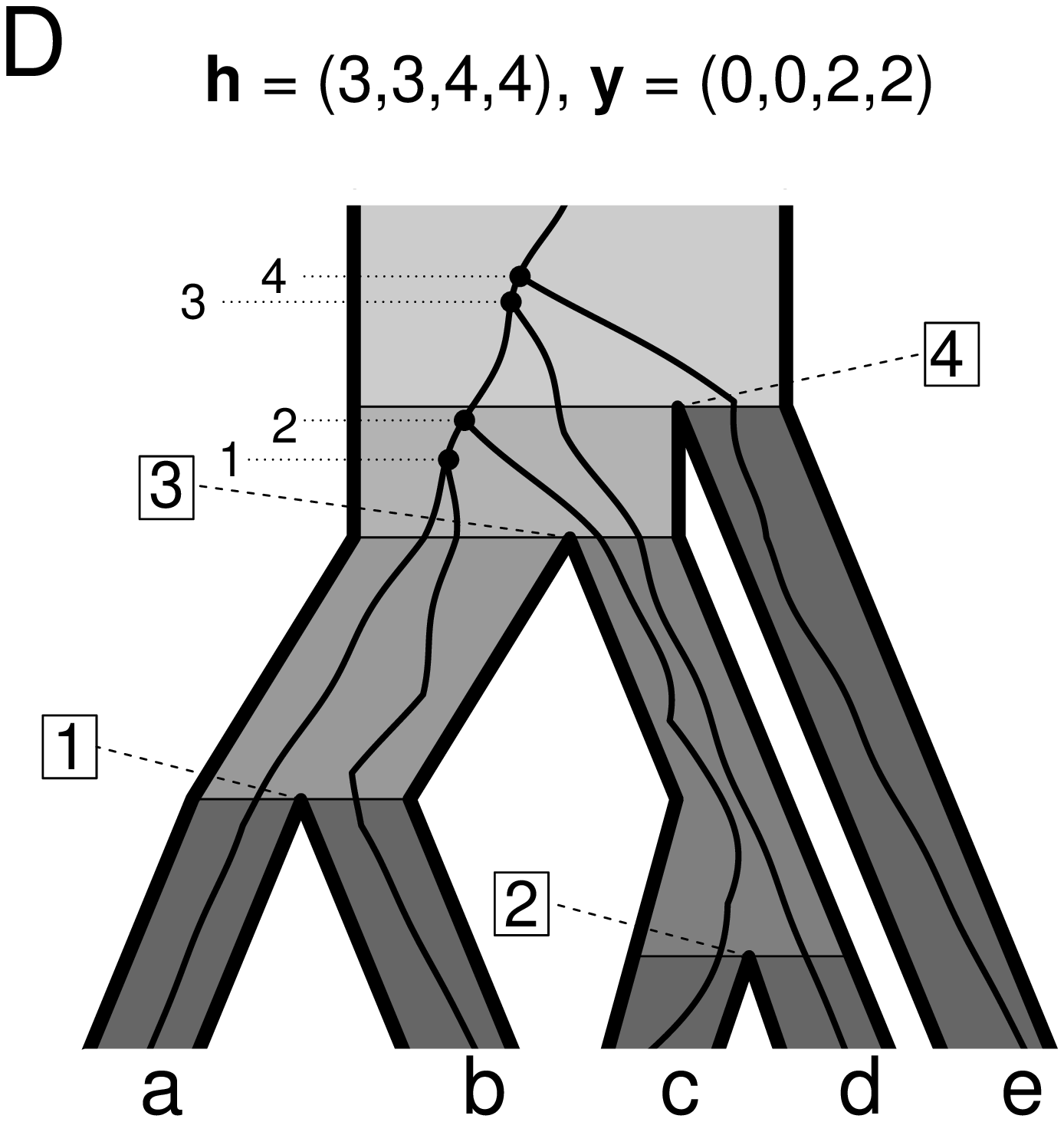}
\caption{A species tree with the matching gene tree (A,B,C) under three different coalescent histories (out of 13 possible), and a nonmatching caterpillar gene tree (D).  Speciation events occur when populations (shaded polygons) split into two new populations going forward in time (downward).  The population ancestral to the root of the species tree (lightest shading) is assumed to extend infinitely into the past; all other populations have finite durations.
The nodes of the trees are labelled in a postorder traversal using large, boxed numbers for the species tree, and unboxed numbers for the coalescent events. The vectors $\mathbf h, \mathbf y$ give coalescent histories and population histories, respectively, as explained in Section \ref{sec:defs}, using node labels as vector indices.}\label{fig:fig1}
\end{center}
\end{figure*}

The multispecies coalescent model gives a stochastic description of gene tree formation within a species tree. Kingman's coalescent model \citep{kingman1982a,hudson1983a,tajima1983,wakeley2008}  is adopted for each population (edge) of the the species tree,  so that the waiting time until coalescence between any pair of gene lineages within a population, going backwards in time, is exponentially distributed with mean 1.  At each node of the species tree, gene lineages reaching it from its descendent edges `enter' the population above starting a new coalescent process.
Combining calculations of probabilities for the within-population Kingman coalescent process with combinatorial features of the species tree, it is possible to calculate the probability of the formation of any topological gene tree
 \citep{degnan2005}.
A rooted species tree, with branch lengths, relating $n$ taxa thus determines a probability mass function on the set of all $(2n-3)!!$ rooted topological gene trees defined on the same species.

Under this model, the most likely gene tree topology does not necessarily match that of the species tree.
For example, the species tree $(((a,b),c),d)$, with choices of appropriate branch lengths, can result in any of the symmetric gene tree topologies, $((a,b),(c,d))$, $((a,c),(b,d))$, or $((a,d),(b,c))$, being more probable than the gene tree $(((a,b),c),d)$.  The term \emph{anomalous gene tree} (AGT) is used to describe gene trees that are more probable than the gene tree with the same topology as the species tree.  Although for four taxa, AGTs only arise for an asymmetric species tree, for any species tree topology with five or more taxa there are branch lengths (durations of internal populations) that lead to at least one AGT \citep{degnan2006}. 

Although this result describes the shapes of species trees that can have AGTs, less is known about gene tree shapes that can be AGTs.  For four taxa  \citep{degnan2006}, explicit computation of gene tree probabilities under the coalescent showed that only symmetric gene trees can be AGTs .  For five taxa \citep{rosenberg2008}, a computation showed that if the species tree is completely unbalanced, e.g, $((((a,b),c),d),e)$, then any gene tree with a different unlabeled topology can be an AGT. However, for five-taxon species trees of any topology, a completely unbalanced gene tree is never an AGT.  Furthermore, any noncaterpillar gene tree can be an AGT for some species tree. For example, if the species tree is a caterpillar, then any noncaterpillar gene tree is more probable than the matching gene tree if all species tree branch lengths are sufficiently short \citep{degnan2006}.

We refer to completely unbalanced trees, such as $((((a,b),c),d),e)$  and its analogs with more taxa, as \emph{rooted caterpillars}, usually omitting the word ``rooted" as this paper only concerns rooted trees.  We generalize the above observations by showing that for species trees of any size, there are no AGTs with caterpillar topologies.   This also implies the statement chosen as the title of this paper, using the terminology introduced in  \citet{degnan2006} which we restate in the next section.

While our results are theoretical, they have potential to contribute to the practice of species tree inference. For instance, when different genes yield different inferred phylogenetic trees, or different methods yield conflicting estimated species trees, evolutionary biologists sometimes wonder if their inferred tree is an AGT rather than the desired species tree \citep[e.g.][]{castillo-ramirez2008,zhaxybayeva2009}.  A recent paper uses a heuristic test based on taking subsets of four-taxa to conclude that there is evidence of the anomaly zone in a skink phylogeny \citep{linkem2014}.
One implication for our results is that if a phylogenetic method returns a caterpillar tree (as often happens in with smaller numbers of species), the empirical phylogeneticist can be sure that an AGT was not inferred.

\section{Notation and Definitions}\label{sec:defs}
Let $X$ denote a finite set, whose elements we refer to as \emph{taxa}. By a \emph{tree on $X$} we will mean a topological tree with leaves bijectively labeled by $X$.

\begin{defn}
A \emph{species tree $\sigma = (\psi,\boldsymbol{\lambda})$ on $X$} is a rooted, binary tree $\psi$ on $X$ together with a vector $\boldsymbol{\lambda} = (\lambda_{1}, \dots, \lambda_{n-2})$ of internal edge lengths (weights), where $n = |X|$, $\{e_1,\dots, e_{n-2}\}$ are the internal edges of $\psi$, and $\lambda_{i} > 0$  is the length of $e_i$ for $i=1, \dots, n-2$
\end{defn}

Nodes of the species tree represent speciation events, and edges represent populations extending over time. Edge lengths are given in coalescent units which (for constant population size) are the ratio of elapsed time to population size. It is convenient for the coalescent model to view $\psi$ as augmented by an additional directed edge leading to its root, in order to refer to a population ancestral to the root. We treat this edge as having infinite length, and consider it to be an internal edge of the species tree. 

The coalescent on a species tree $\sigma$ models the formation of gene trees by the merging of ancestral lineages (going backwards in time) within the populations represented by the tree's edges. We focus on the situation where one lineage is sampled per taxon, so pendant edge lengths for the species tree would be irrelevant. With this sampling scheme, a gene tree can also be leaf-labeled by $X$.

Since under the standard coalescent only binary gene trees have positive probability of being realized, and we are interested solely in the topological form of these trees, we make the following definition.

\begin{defn}
A \emph{gene tree}, $T$, on taxa $X$ is a rooted binary tree on $X$. 
\end{defn}

\begin{defn}
Given a species tree $\sigma = (\psi,\boldsymbol{\lambda})$, the \emph{matching gene tree} is the gene tree $T_M$ isomorphic to $\psi$ as a leaf-labeled tree.  
\end{defn}
 Though it is in some sense artificial to distinguish between $\psi$  and $T_M$, we do so in order to keep clear the difference in viewpoint between the fixed topological species tree $\psi$ and one of the possible states, $T_M$, of the gene tree random variable under the coalescent model.

Probabilities of an event $E$ under the 1-sample per taxon coalescent model on a species tree $\sigma$ are denoted $\PP_\sigma(E)$. In particular, the probability of a gene tree $T$ is $\PP_\sigma(T)$. (See Degnan and Salter, 2005,  for details on computations of such probabilities.)

\begin{defn}[Degnan and Rosenberg, 2006]
A gene tree $T$ is said to be an \emph{anomalous gene tree} (AGT) for a species tree $\sigma = (\psi,\boldsymbol{\lambda})$ if $\PP_{\sigma}(T) > \PP_{\sigma}(T_M)$.
\end{defn}

AGTs are significant, since their existence thwarts picking the most frequently occurring gene tree in a sample as the estimate of the species tree \citep{degnan2006}.  Though intuitively appealing, this democratic vote method is
not  statistically consistent. The following pathological situation is one where such voting is particularly misleading, in that voting based on gene trees arising from several species trees always ranks the true tree last.
\begin{defn}\citep{degnan2006}
A \emph{wicked forest} $W$ is a set of at least two species trees, with distinct topologies but defined on the same set of taxa $X$, such that for all $\sigma_i,\sigma_j  \in W$ with $i \ne j$, the gene tree $T^j_M$ matching $\sigma_j$  is an AGT for $\sigma_i$.
\end{defn}

The first set of trees noticed to form a wicked forest had six taxa and was given by \citet{degnan2006}. Their discovery was motivated by trying to find examples of trees that were AGTs yet were  less balanced than the matching tree.  Rosenberg and Tao (2008) fully characterized wicked forests for five-taxon trees, the smallest number of taxa for which wicked forests exist.   The maximum number of trees that can form a wicked forest for $n>5$ taxa is not known.  An example of a wicked forest with three trees is shown in Figure 2 and is based on swapping two-taxon clades in the trees. 

\begin{figure*}
 \setlength{\unitlength}{0.05 mm}%
\begin{picture}(45,0)(0,0)
\put(0,-100){\fontsize{12.23}{15.07}\selectfont   \makebox(0.0, 0.0)[c] {\;\;\;\;\;\;$I$\;\;\;\;\;\;\;\;\;\;\;\;\;\;\;\;\;\;\;\;\;\;\;\;\;\;\;\;\;\;\;\;\;$II$\;\;\;\;\;\;\;\;\;\;\;\;\;\;\;\;\;\;\;\;\;\;\;\;\;\;\;\;\;\;\;\;\;$III$}}
\end{picture}\\
\bigskip
\vspace{1.1cm}
\includegraphics[width=.32\textwidth]{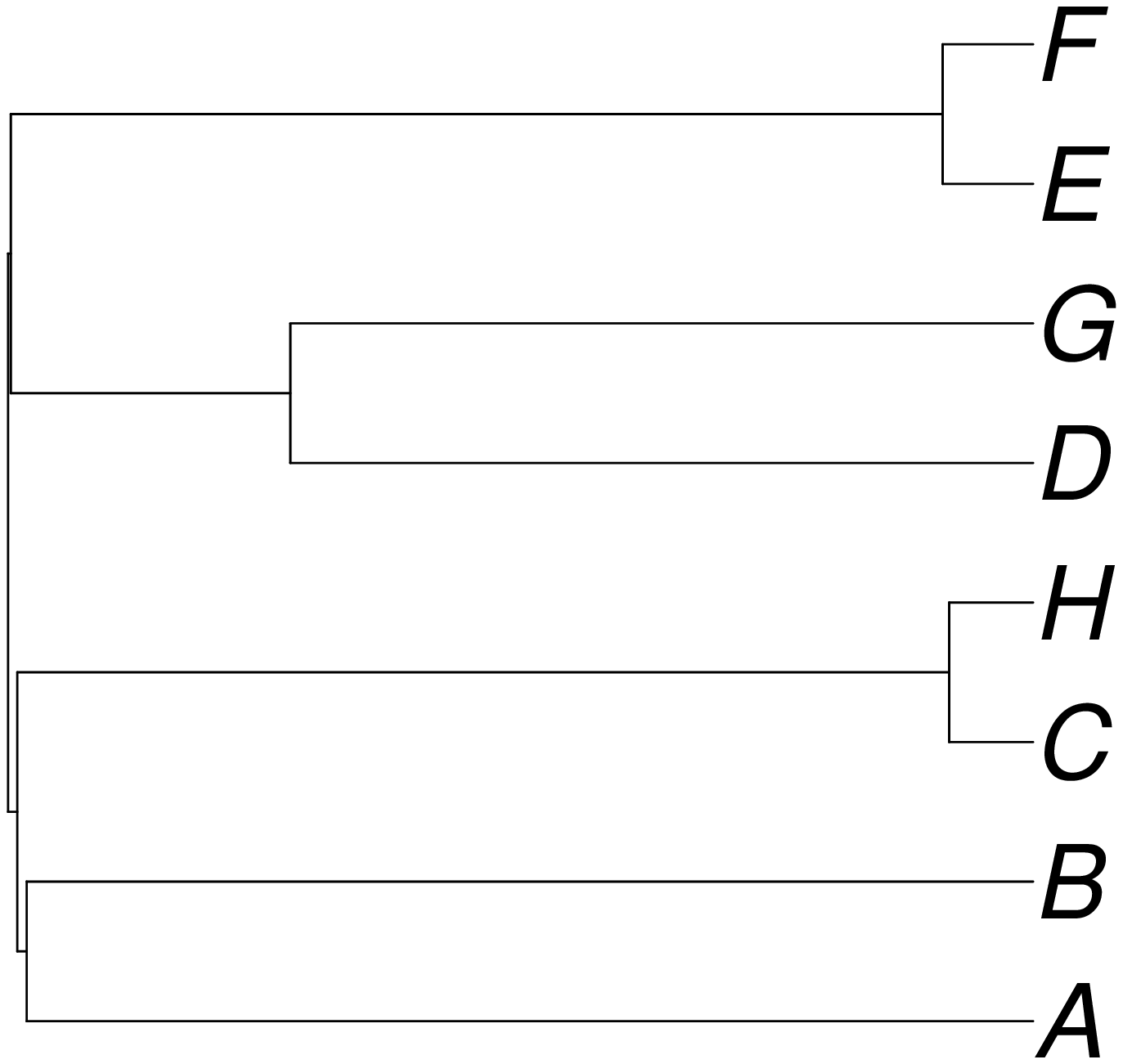}
\includegraphics[width=.32\textwidth]{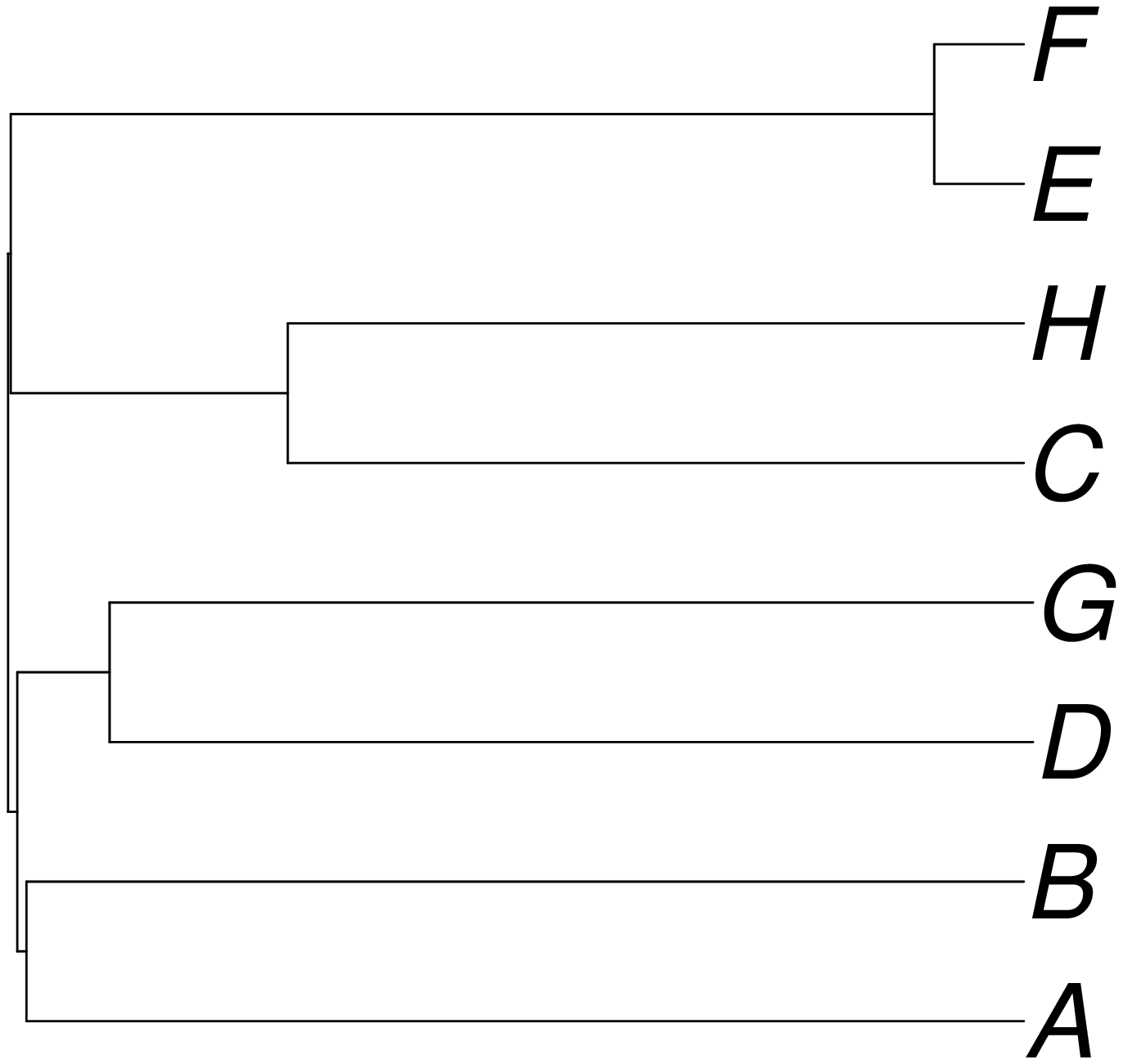}
\includegraphics[width=.32\textwidth]{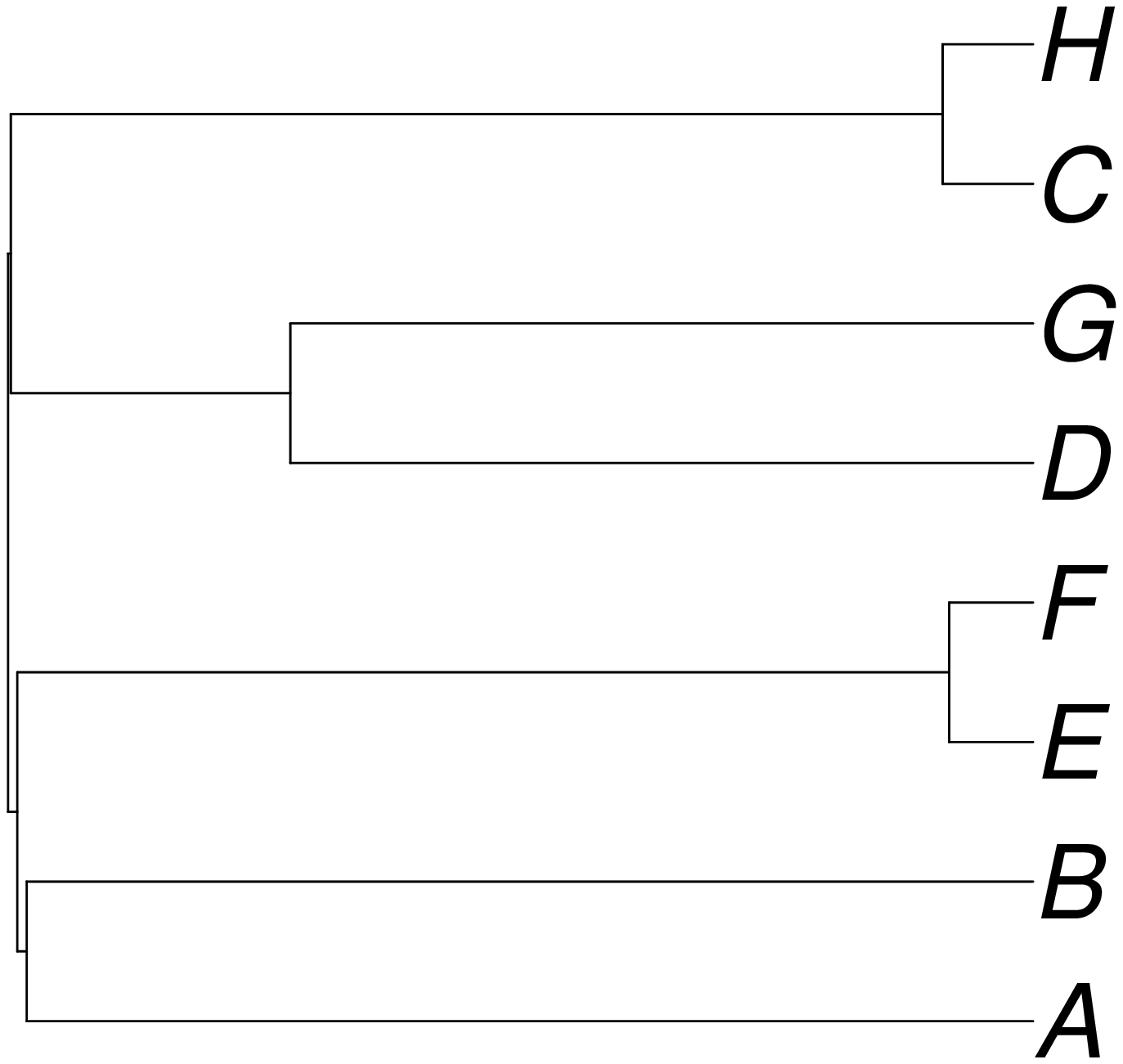}
\caption{A wicked forest with three balanced 8-taxon species trees.  Branch lengths are drawn to scale with the total depth of the tree equal to 0.11 coalescent units.  For each species tree $i \in \{ I, II, III\}$, the two gene trees with the matching topology for species tree $j \in \{I, II, III\} \smallsetminus \{i\}$ are AGTs for species tree $i$.  Species tree $I$ in Newick format is $(((A:.108,B:0.108):.001,(C:0.009,H:0.009):0.1):.0010,((D:0.0797,G:0.0797):0.03,(E:0.0097,F:0.0097):.100):.0003)$.}\label{F:wicked}
\end{figure*}  

\medskip

To compute and compare the probabilities of various gene trees under the coalescent model, we need further technical notions.

We treat all trees as directed graphs, with all edges directed away from the root (except, in species trees, for the ``edge'' ancestral to the root). Since we depict trees with the root placed above the leaves, we use terminology such as `ancestral' and `above,' or `descendent' and `below' interchangeably to describe directed relationships of nodes and edges.

Under the coalescent model on a species tree $\sigma=(\psi,\boldsymbol \lambda)$ on $X$, all gene trees $T$ on $X$ are realizable. That is, $\PP_{\sigma}(T)>0$ for all $T$. To compute $\PP_{\sigma}(T)$ one considers the various ways in which $T$ is realizable. This may be done at several levels of detail. The most detailed non-metric characterization would be to specify  \emph{coalescent histories with in-population rankings}, in which for each node of $T$ one indicates an edge of $\psi$ on which the coalescent event that node described occurred, as well as an ordering to the coalescent events within each species tree edge. (These are called {\it instantiations} of coalescent histories by \citet{degnan2005} ). 

A less detailed level is to specify \emph{coalescent histories}, where the ranking of coalescent events on edges is not recorded.
This is the key notion used by \citet{degnan2005} for the computation of gene tree probabilities (with adjustments for the count of possible in-population rankings). 

Finally, a \emph{population history} is an even cruder summary.  It records only the number of coalescent events on the edge, but does not record which lineages coalesced. To the best of our knowledge, this concept has not been used in previous works studying species trees and gene trees, though it plays an essential role in our arguments.

\smallskip

To formalize these notions, it is useful to encode the topology of a tree through the ancestral relationships of its nodes.
Let $V_T$ denote the set of nodes of a rooted tree $T$ (either a gene or species tree), and $I_T\subset V_T$ the subset of internal nodes.
Let $$\alpha_{ij} =\begin{cases}1&\text{ if node $i$ is ancestral or equal to node $j$,}\\0&\text{ otherwise.}\end{cases}$$
 This indicator function $\alpha$ on $V_T\times V_T$ fully encodes the topology of $T$. Labeling the edges of $T$ by the label of their end nodes, $\alpha$ also gives indicators of ancestral relationships between edges.

\begin{defn}\citep{degnan2005}\label{def:chist}
Let $\sigma = (\psi,\boldsymbol{\lambda})$ be a species tree and $T$  a gene tree on $X$, with $|X|=n$. A \emph{coalescent history for $T$} is an $(n-1)$-tuple $\mathbf{h} = \mathbf {h}_T= (h_i)_{i\in I_T}$ with each $h_i\in {I_\psi}$ satisfying
\begin{enumerate}
\item \label{cond:ch1} for all $i\in I_T$, the set of leaves descended from node $i$ of $T$ is a subset of the set of leaves descended from node $h_i$ of $\psi$; i.e., for all leaf labels $k$, $\alpha_{ik}=1$ implies $\alpha_{h_i k}=1$, and
\item \label{cond:ch2} if node $i$ is ancestral to node $j$ on $T$, then node $h_i$ is ancestral or equal to node $h_j$ on $\sigma$; i.e., $\alpha_{ij}=1$ implies $\alpha_{h_ih_j} = 1$.
\end{enumerate}
The set of coalescent histories for a species tree with topology $\psi$ and a gene tree $T$ is denoted $H_{\psi,T}$.
\end{defn}

Conceptually, such a history records that the coalescent event forming node $i$ of the gene tree occurs in the population immediately above node $h_i$ of the species tree. Condition (1) thus encodes the idea that coalescences must predate the most recent common ancestor of the populations from which they were sampled.  Condition (2) ensures that the sequence of coalescences is consistent with the topology of the gene tree; e.g., if a gene tree displays subtree $((a,b),c)$, then  $c$ cannot coalesce with $(a,b)$ in population $i$ unless $a$ and $b$ have coalesced either in population $i$ or one of its descendant populations in the species tree.

A coalescent history can be viewed as an event under the coalescent model. Moreover, $H_{\psi,T}$ gives a partition of the event that the gene tree is $T$ into disjoint subevents $\mathbf h$. Although by definition $\PP(T,\mathbf h)=\PP(\mathbf h)$, for clarity we prefer to include the redundant reference to $T$ in this notation.
Note that $\PP_\sigma(T,\mathbf h)>0$ for every $\mathbf h\in H_{\psi,T}$ \citep{degnan2005}.

\begin{defn}\label{def:phist}
Let $\sigma = (\psi,\boldsymbol{\lambda})$ be a species tree on $X$, with $|X|=n$.  A \emph{population history} for $\psi$ is an $(n-1)$-tuple $\mathbf{y} = (y_i)_{i\in I_\psi}$ with $y_i\in \{0,1,\dots n-1\}$ satisfying

\begin{enumerate}
\item \label{cond:ph1} $\sum_{i\in I_\psi} y_i = n-1$,  and
\item \label{cond:ph2} $\sum_{j\in I_\psi} (1-y_j)\alpha_{ij} \ge 0$ for all $i \in {I_\psi}$.
\end{enumerate}
The set of all $(n-1)$-tuples satisfying conditions (1) and (2) is denoted $Y_\psi$.
\end{defn}

One should interpret a population history as indicating the number of coalescent events on each edge of a species tree that leads to a realization of some (unspecified) gene tree. Then
condition \eqref{cond:ph1} of the definition is interpreted as stating that over the full species tree all lineages ultimately coalesce into one --- i.e., there are a total of $n-1$ coalescences.

Condition \eqref{cond:ph2} requires more elucidation:
First note that for $i\in I_\psi$,
\begin{equation*}
\sum_{j\in I_\psi} \alpha_{ij}=\ell_i-1\end{equation*}
where $\ell_i$ is the number of leaf descendants of node $i$ on $\psi$.  This equivalence is due to the number of leaf descendants of a node of a binary tree being the number of internal descendants plus 1.  As an example, for the species tree in Figure 1(A), we have
$$\alpha_{11} = \alpha_{22} = \alpha_{31} = \alpha_{32} = \alpha_{33} = \alpha_{41} = \alpha_{42} = \alpha_{43} = \alpha_{44} = 1$$
and $\alpha_{ij} = 0$ for all other choices of $i$ and $j$.  To further illustrate the example,
$$\ell_3 = \alpha_{31} + \alpha_{32} = \alpha_{33} = 3 = 4 - 1.$$

Thus condition \eqref{cond:ph2} is equivalent to
$$\ell_i> \sum_{j\in I_\psi} y_j \alpha_{ij},$$
for each $i\in I_\psi$. This expresses that the number of coalescent events occurring on edges $i$ and below in the species tree cannot exceed the maximum possible for the lineages present in that part of the tree.

For any fixed species tree $\psi$ and gene tree $T$ there is a natural map $\Phi_{\psi,T}$ from the set of coalescent histories to the set of population histories, defined by `forgetting' which lineages coalesce: More formally
$$\Phi_{\psi,T}:H_{\psi,T}\to Y_\psi$$
$$\mathbf h=(h_j)_{j\in I_T} \mapsto \mathbf y=(y_i)_{i\in I_\psi},$$
where
$$y_i=\sum_{j\in I_T} \delta(h_j=i)$$
is the sum of indicators.

Population histories can also be viewed as events under the coalescent model.

\begin{defn} Given a species tree $\sigma$,
we say that a population history $\mathbf{y}\in Y_\psi$ is \emph{compatible} with a gene tree $T$ 
if they can be simultaneously realized, \emph{i.e.}, if $\PP_{\sigma}(T,\mathbf{y}) > 0$. 
We use $Y_{\psi,T}$ to denote the set of population histories compatible with a gene tree $T$.  
\end{defn}

Note that
 $Y_{\psi,T}=\Phi_{\psi,T}(H_{\psi,T})$, and $\PP_\sigma(T,\mathbf y)=\sum_{\mathbf h\in \Phi^{-1}_{\psi,T}(\mathbf y)}\PP_\sigma(T,\mathbf h).$

The loss of information in passing from coalescent histories to population histories is illustrated in Figure 1.  
In (A) and (B), two different coalescent histories for the matching gene tree yield the same population history.  For (A), the coalescent history is $\mathbf h=(4,3,4,4)$ because node 2 of the gene tree coalesces in population 3 of the species tree (hence $h_2=3$), while all other nodes coalesce in population 4 ($h_i=4$ for $i \ne 2$).  In (B), the coalescent history is $\mathbf h=(3,4,4,4)$ since node 1 of gene tree coalesces in population 3 of the species tree ($h_1=3$).  Both coalescent histories have one coalescence in population 3 and three coalescences in population 4, making their population histories both $(0,0,1,3)$.

Figure 1(C) and (D) illustrate another aspect of coalescent histories and population histories: that the probability of the same numbers of events in each species tree population can have a higher probability for a non-matching tree.
In (C), there are two coalescent events in population 3.  For this gene tree, either the $(a,b)$ coalescence or the $(c,d)$ coalescence can occur more recently within population 3 and result in the same gene tree topology, coalescent history, and population history, but different in-population rankings.  For the same population history with a caterpillar gene tree (D), however, the gene tree topology constrains the coalescence of lineage $c$ to be more ancient than the coalescence of $a$ with $b$.  This results in a lower probability for the same population history when the gene tree is a caterpillar compared to the matching gene tree.

\section{Results}

Our main result is the following:

\begin{thm}\label{T:MainTheorem} For a species tree $\sigma=(\psi,\lambda)$,
let  $T$ be any caterpillar gene tree, with $T\ne T_M$.  Then $\PP_{\sigma}(T) < \PP_{\sigma}(T_M)$. In particular, a caterpillar is never an AGT.
\end{thm}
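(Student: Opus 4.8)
The plan is to compute $\PP_\sigma(T)$ and $\PP_\sigma(T_M)$ by grouping the realizations of each gene tree according to their population history, and to compare the resulting sums term by term. The starting point is a \emph{factorization}: for every $\mathbf y\in Y_\psi$,
\[
\PP_\sigma(T,\mathbf y)=N(T,\mathbf y)\,C(\mathbf y,\sigma),
\]
where $N(T,\mathbf y)$ is the number of instantiations (coalescent histories with in-population rankings) of $T$ whose population history is $\mathbf y$, and $C(\mathbf y,\sigma)>0$ is a quantity \emph{independent of the gene tree}: a product over internal species edges of a Kingman within-population probability of exactly $y_e$ coalescences occurring in that edge (starting from the relevant number of lineages) times a reciprocal product of binomial coefficients correcting for the count of admissible in-population orderings. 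The reason $C$ does not depend on $T$ is that, once $\mathbf y$ is fixed, the number of lineages entering each species edge is determined --- it is the number of leaves below the edge minus the coalescences tallied strictly below it --- so every species edge contributes the same factor regardless of which gene tree is being built; condition (2) of Definition \ref{def:phist} is exactly what makes these lineage counts at least $1$, hence all the factors positive (with the infinite root edge contributing a factor $1$). Summing over population histories gives $\PP_\sigma(T)=\sum_{\mathbf y\in Y_\psi}N(T,\mathbf y)\,C(\mathbf y,\sigma)$, and likewise for $T_M$.

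Two combinatorial facts then carry the argument. First, if $T$ is a caterpillar then $N(T,\mathbf y)\le1$ for every $\mathbf y$: the internal nodes of a caterpillar are linearly ordered, so inside any species population there is a unique admissible ordering of the coalescences placed there, and moreover the coalescent histories of a caterpillar march its nodes, in their forced order, up a single chain of species edges; a population history determines that chain (it is the support of $\mathbf y$), hence the entire assignment, so at most one instantiation survives. Second, $Y_{\psi,T_M}=Y_\psi$: the matching gene tree is compatible with \emph{every} population history. I would prove this by running the coalescent up the species tree in the shape of $T_M$, performing the prescribed number of merges on each edge in turn, using the fact that whenever two or more lineages coexist on a species edge in such a realization they correspond to an antichain of $\psi$-clades covering the relevant leaf set, and any such antichain contains two siblings of $\psi$; thus a $T_M$-consistent merge is always available and the number of coalescences permitted by condition (2) can always be realized.

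Granting these, the theorem follows. For a caterpillar $T$, $\PP_\sigma(T)=\sum_{\mathbf y\in Y_{\psi,T}}C(\mathbf y,\sigma)$, whereas $\PP_\sigma(T_M)=\sum_{\mathbf y\in Y_\psi}N(T_M,\mathbf y)\,C(\mathbf y,\sigma)\ge\sum_{\mathbf y\in Y_\psi}C(\mathbf y,\sigma)$ because $N(T_M,\mathbf y)\ge1$ on $Y_\psi=Y_{\psi,T_M}$; since $Y_{\psi,T}\subseteq Y_\psi$ this already yields $\PP_\sigma(T)\le\PP_\sigma(T_M)$. For the strict inequality I would distinguish two cases. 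If $\psi$ is not a caterpillar, take $\mathbf y^\ast=(0,\dots,0,n-1)$, with all coalescences on the root edge: it lies in $Y_\psi$, has $N(T,\mathbf y^\ast)=1$, and has $N(T_M,\mathbf y^\ast)$ equal to the number of rankings of $T_M$ (linear orderings of its coalescences consistent with ancestry), which exceeds $1$ because $T_M$, being isomorphic to $\psi$, is not a caterpillar; as $C(\mathbf y^\ast,\sigma)>0$ the two sums differ. If $\psi$ is a caterpillar, then $T$ and $T_M$ are distinct caterpillars, so some shortest initial segment --- say the first $m$ taxa --- of $T_M$'s leaf order is not a clade of $T$; the population history placing one coalescence on each of the $m-1$ deepest internal species edges and the remaining $n-m$ on the root belongs to $Y_\psi=Y_{\psi,T_M}$ but not to $Y_{\psi,T}$, since realizing it would force $T$ to create exactly that forbidden clade. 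Either way $\sum_{\mathbf y}\bigl(N(T_M,\mathbf y)-N(T,\mathbf y)\bigr)C(\mathbf y,\sigma)>0$, i.e.\ $\PP_\sigma(T)<\PP_\sigma(T_M)$.

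I expect the genuine obstacle to be the second combinatorial fact, $Y_{\psi,T_M}=Y_\psi$ --- formalizing that the matching tree is flexible enough to absorb any legal tally of coalescences --- together with a careful verification of the factorization lemma, especially the claim that the in-population ranking correction depends only on the population history. The caterpillar rigidity $N(T,\mathbf y)\le1$ and the strictness bookkeeping should be comparatively routine once those pieces are in place.
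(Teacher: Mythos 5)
Your proposal is correct, and its skeleton is the paper's: decompose each probability over population histories (Lemma \ref{lem:pop_tot}), factor $\PP_{\sigma}(T,\mathbf y)$ as a gene-tree-dependent count of instantiations times a factor depending only on $\mathbf y$ and $\boldsymbol\lambda$, use caterpillar rigidity (your $N(T,\mathbf y)\le 1$ is the paper's $R_{T,\mathbf y}=1$ in Lemma \ref{lem:numerator}), and prove that the matching tree is compatible with every population history (Lemma \ref{lem:john}). Where you genuinely diverge is in two sub-arguments, both of which are sound. For Lemma \ref{lem:john} the paper inducts on $n$, pruning one leaf of a cherry of $\psi$ and removing one coalescence from the most recent occupied ancestral population, then verifying the reduced tuple still satisfies Definition \ref{def:phist}; your greedy realization --- lineages on a species edge form an antichain of $\psi$-clades partitioning the leaves below it, any such antichain of size at least two contains a sibling pair (take a deepest LCA of two lineages), and condition (2) guarantees a lineage always remains so the prescribed merges can be performed --- is a more direct alternative, with the antichain-contains-siblings step being the detail you would have to write out carefully. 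For strictness the paper uses one uniform device, Lemma \ref{lem:minpophist}: the all-ones history $\mathbf 1$ is compatible with $T_M$ and with no other gene tree, so the sum for $T_M$ contains a positive term absent from the sum for $T$. Your two-case substitute (if $\psi$ is not a caterpillar, the all-root history gives $N(T_M,\mathbf y^\ast)\ge 2>1=N(T,\mathbf y^\ast)$; if $\psi$ is a caterpillar, an explicit history forcing a clade of $T_M$ that $T$ lacks) also works --- indeed in your second case with $m=n-1$ your history is exactly $\mathbf 1$ --- but it is less economical, and the paper's lemma buys a little more: since strictness comes purely from $Y_{\psi,T_M}\setminus Y_{\psi,T}\ne\emptyset$, it also yields the purely combinatorial comparison of coalescent-history counts recorded in Remark \ref{rem:hist}, which your ranking-count argument in the non-caterpillar case would not give directly.
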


As a consequence, we also obtain:

\begin{cor}There are no caterpillars in a wicked a forest.
\end{cor}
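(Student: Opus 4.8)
\emph{Proof plan.} The corollary is immediate once Theorem~\ref{T:MainTheorem} is known: if some member $\sigma_j$ of a wicked forest $W$ had a caterpillar topology, then, since the members of $W$ have pairwise distinct topologies, the matching gene tree of $\sigma_j$ would be a caterpillar gene tree different from the matching gene tree of any other $\sigma_i\in W$, hence not an AGT for $\sigma_i$ by Theorem~\ref{T:MainTheorem} --- contradicting the definition of a wicked forest. So all the work is in Theorem~\ref{T:MainTheorem}, which I would attack by comparing $\PP_\sigma(T)$ and $\PP_\sigma(T_M)$ one population history at a time. The first step is a reduction: in the Degnan--Salter formula for $\PP_\sigma(T,\mathbf h)$, the analytic factors (the within-population Kingman probabilities and the combinatorial normalizers) depend only on how many lineages enter each population of $\psi$ and how many coalesce there, and both of these are determined by $\mathbf y=\Phi_{\psi,T}(\mathbf h)$ alone; collecting terms therefore gives $\PP_\sigma(T,\mathbf h)=c(\mathbf y)\,w(T,\mathbf h)$, where $c(\mathbf y)>0$ depends only on $\sigma$ and $\mathbf y$, not on the gene tree, and $w(T,\mathbf h)$ is the number of in-population rankings refining $\mathbf h$. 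Summing over $\Phi_{\psi,T}^{-1}(\mathbf y)$ yields $\PP_\sigma(T,\mathbf y)=c(\mathbf y)\,N(T,\mathbf y)$, where $N(T,\mathbf y)$ counts coalescent histories with in-population rankings realizing $T$ with population history $\mathbf y$; in particular $N(T,\mathbf y)\ge 1$ whenever $\mathbf y\in Y_{\psi,T}$. It now suffices to show $Y_{\psi,T}\subseteq Y_{\psi,T_M}$ together with $N(T,\mathbf y)\le N(T_M,\mathbf y)$ for all $\mathbf y\in Y_{\psi,T}$, with at least one inequality strict.

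\emph{Caterpillar rigidity.} If $T$ is a caterpillar, all of its internal nodes lie on a single root-to-leaf path, so condition~(2) in Definition~\ref{def:chist} forces the images $h_i$ to form a chain in $\psi$ descending from its root. Two consequences follow: (i) within each population of $\psi$, the coalescences of $T$ assigned there occur in a forced time order, so $w(T,\mathbf h)=1$ for every $\mathbf h\in H_{\psi,T}$; and (ii) the multiset of images $\{h_i\}$, hence the whole tuple $\mathbf h$, is recovered from $\mathbf y$, so $\Phi_{\psi,T}$ is injective. Therefore $N(T,\mathbf y)=1$ for every $\mathbf y\in Y_{\psi,T}$, and since trivially $N(T_M,\mathbf y)\ge 1$ on $Y_{\psi,T_M}$, the inequality $N(T,\mathbf y)\le N(T_M,\mathbf y)$ reduces to the set inclusion $Y_{\psi,T}\subseteq Y_{\psi,T_M}$. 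For a non-caterpillar $T$ neither consequence holds, $w(T,\mathbf h)$ can be large, and AGTs genuinely occur --- so this is exactly where the caterpillar hypothesis is used.

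\emph{The main obstacle: the matching tree realizes every population history.} I would prove the clean statement $Y_{\psi,T_M}=Y_\psi$, which gives $Y_{\psi,T}\subseteq Y_\psi=Y_{\psi,T_M}$ for free. Conditions (1)--(2) of Definition~\ref{def:phist} are visibly necessary for a tuple to be a population history of \emph{some} realization; the content is that, when the gene tree is the matching one, they are also sufficient. Given $\mathbf y\in Y_\psi$, one must assign each of the $n-1$ coalescences of $T_M$ to a population of $\psi$ lying at or above its ``home'' node, consistently with the gene-tree order, so that population $b$ receives exactly $y_b$ of them. Existence of such an assignment is a Hall-type (bipartite matching / flow) problem, and condition~(2), in the form $\sum_j y_j\alpha_{ij}\le \ell_i-1$ for every $i$, is precisely the deficiency bound that makes Hall's condition hold --- one checks that the extremal vertex sets are the subtrees of $\psi$ --- after which a bottom-up greedy construction can be arranged to respect the gene-tree order. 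This is the step where essentially all of the real content sits.

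\emph{Assembling the proof.} From the above, $\PP_\sigma(T)=\sum_{\mathbf y\in Y_{\psi,T}}c(\mathbf y)\le\sum_{\mathbf y\in Y_{\psi,T}}c(\mathbf y)\,N(T_M,\mathbf y)\le\sum_{\mathbf y\in Y_{\psi,T_M}}c(\mathbf y)\,N(T_M,\mathbf y)=\PP_\sigma(T_M)$. For strictness, consider the ``on-schedule'' population history $\mathbf y=(1,1,\dots,1)$, which lies in $Y_\psi=Y_{\psi,T_M}$. It is \emph{not} in $Y_{\psi,T}$ when $T\ne T_M$: if every population of $\psi$ has exactly one coalescent event, then, inducting up $\psi$ from the cherries, exactly two lineages enter each population and merge, so every clade of $\psi$ appears in the resulting gene tree, forcing that gene tree to be $T_M$. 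Hence the second inequality above picks up the strictly positive extra term $c((1,\dots,1))\,N(T_M,(1,\dots,1))$, giving $\PP_\sigma(T)<\PP_\sigma(T_M)$, and the corollary follows as noted.
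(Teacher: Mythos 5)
Your deduction of the corollary from Theorem~\ref{T:MainTheorem} is exactly the paper's argument, and your plan for the theorem itself---factor $\PP_\sigma(T,\mathbf y)$ into a $\mathbf y$-only analytic factor times a ranking/history count, show the caterpillar forces that count to be $1$, prove $Y_{\psi,T_M}=Y_\psi$, and obtain strictness from the all-ones population history---reproduces the paper's lemma structure (Lemmas \ref{lem:pop_tot}--\ref{lem:minpophist}) and its assembly of the final chain of inequalities. The only divergence is your Hall-type matching framing of $Y_{\psi,T_M}=Y_\psi$, which you leave as a sketch; the paper instead does an induction that prunes a cherry and assigns its coalescence to the lowest population permitted by $\mathbf y$, i.e.\ precisely the ``bottom-up greedy construction'' you gesture at, so the approaches are essentially the same.
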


\begin{proof}
Any species tree in a wicked forest must have a topology which can be an AGT for some other species tree defined on the same taxa.  Since caterpillars cannot be AGTs by Theorem \ref{T:MainTheorem}, no species tree in a wicked forest can have a caterpillar topology.  
\end{proof}

Our proof of the theorem is built on a succession of lemmas. To simplify statements, we assume throughout that the species tree $\sigma=(\psi,\lambda)$ has been fixed.

The first lemma is immediately clear.

\begin{lem}\label{lem:pop_tot}
The probability of a gene tree $T$ can be written as $$ \PP_{\sigma}(T) =\sum_{\mathbf{y}\in Y_{\psi,T}} 
\PP_{\sigma}(T,\mathbf{y}).$$
\end{lem}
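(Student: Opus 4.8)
The plan is to reduce the identity to facts already recorded in the excerpt, namely that coalescent histories partition the event $\{T\}$ and that population histories are obtained from them by the forgetful map $\Phi_{\psi,T}$. First I would invoke the observation noted just after Definition \ref{def:chist}: the set $H_{\psi,T}$ partitions the event that the gene tree equals $T$ into pairwise disjoint subevents $\mathbf h$ (each of positive probability, but that is not needed here). By the law of total probability this gives $\PP_\sigma(T) = \sum_{\mathbf h\in H_{\psi,T}} \PP_\sigma(T,\mathbf h)$.

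Next I would group the terms of this sum according to the value of $\Phi_{\psi,T}$. Since $\Phi_{\psi,T}$ is a well-defined function on $H_{\psi,T}$ with image $Y_{\psi,T}$, its fibers $\Phi^{-1}_{\psi,T}(\mathbf y)$, as $\mathbf y$ ranges over $Y_{\psi,T}$, form a partition of $H_{\psi,T}$. Hence $\sum_{\mathbf h\in H_{\psi,T}} \PP_\sigma(T,\mathbf h) = \sum_{\mathbf y\in Y_{\psi,T}}\ \sum_{\mathbf h\in \Phi^{-1}_{\psi,T}(\mathbf y)} \PP_\sigma(T,\mathbf h)$. Finally I would apply the identity $\PP_\sigma(T,\mathbf y)=\sum_{\mathbf h\in \Phi^{-1}_{\psi,T}(\mathbf y)}\PP_\sigma(T,\mathbf h)$, recorded immediately before the lemma, to each inner sum, obtaining $\PP_\sigma(T) = \sum_{\mathbf y\in Y_{\psi,T}}\PP_\sigma(T,\mathbf y)$, as claimed.

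There is essentially no obstacle here, consistent with the authors calling the lemma ``immediately clear.'' The only point meriting a word of care is that the range of summation is exactly $Y_{\psi,T}$ and not all of $Y_\psi$; this is automatic, since $Y_{\psi,T}=\Phi_{\psi,T}(H_{\psi,T})$ by definition, and any $\mathbf y\in Y_\psi\setminus Y_{\psi,T}$ contributes a zero term because $\PP_\sigma(T,\mathbf y)=0$ for such $\mathbf y$ by the definition of compatibility. Alternatively, one can argue directly at the level of coalescent realizations: each realization of the coalescent process on $\sigma$ that produces $T$ determines, by counting the coalescent events falling on each edge of $\psi$, a unique population history, and the collection of population histories so arising is precisely $Y_{\psi,T}$; summing $\PP_\sigma$ over this partition of $\{T\}$ yields the stated formula.
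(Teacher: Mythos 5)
Your proof is correct and matches the paper's (unstated) reasoning: the paper calls the lemma ``immediately clear'' and gives no proof, precisely because, as you show, it follows by grouping the partition of the event $\{T\}$ into coalescent histories according to the fibers of $\Phi_{\psi,T}$ and using the identity $\PP_\sigma(T,\mathbf y)=\sum_{\mathbf h\in \Phi^{-1}_{\psi,T}(\mathbf y)}\PP_\sigma(T,\mathbf h)$ recorded just before the lemma. Your remark about why the sum ranges over $Y_{\psi,T}$ rather than $Y_\psi$ is a correct and harmless clarification.
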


\begin{lem}\label{lem:john}
The matching gene tree $T_M$ is compatible with every population history. That is, $Y_{\psi,T_M}=Y_\psi$, so  $Y_{\psi,T}\subseteq Y_{\psi,T_M}$ for every gene tree $T$.
 \end{lem}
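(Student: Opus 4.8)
The plan is to prove the nontrivial inclusion $Y_\psi\subseteq Y_{\psi,T_M}$; the reverse inclusion $Y_{\psi,T_M}\subseteq Y_\psi$ is automatic, since $Y_{\psi,T_M}=\Phi_{\psi,T_M}(H_{\psi,T_M})$ and $\Phi_{\psi,T_M}$ takes values in $Y_\psi$ by definition. Once $Y_{\psi,T_M}=Y_\psi$ is in hand, the containment $Y_{\psi,T}\subseteq Y_{\psi,T_M}$ for an arbitrary gene tree $T$ follows immediately from $Y_{\psi,T}=\Phi_{\psi,T}(H_{\psi,T})\subseteq Y_\psi$. So everything reduces to the following: given an arbitrary population history $\mathbf y=(y_i)_{i\in I_\psi}\in Y_\psi$, construct a coalescent history $\mathbf h\in H_{\psi,T_M}$ with $\Phi_{\psi,T_M}(\mathbf h)=\mathbf y$.

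First I would reformulate the target. Identifying $I_{T_M}$ with $I_\psi$ through the isomorphism $T_M\cong\psi$, and using that one clade of $\psi$ contains another exactly when the corresponding node is ancestral, a coalescent history for $T_M$ is precisely a map $h\colon I_\psi\to I_\psi$ such that $h_v\succeq v$ for every $v$ (condition~\eqref{cond:ch1} of Definition~\ref{def:chist}) and $h_v\succeq h_w$ whenever $v$ is ancestral to $w$ (condition~\eqref{cond:ch2}); the requirement $\Phi_{\psi,T_M}(\mathbf h)=\mathbf y$ then says that the fibre $\{v:h_v=i\}$ has exactly $y_i$ elements for each $i\in I_\psi$. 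The useful auxiliary quantity is $b_i:=\ell_i-\sum_{j\in I_\psi,\,j\preceq i}y_j$, the number of lineages that leave edge $i$ in a realization with population history $\mathbf y$; condition~\eqref{cond:ph2} is exactly the statement $b_i\ge 1$ for all $i\in I_\psi$, and condition~\eqref{cond:ph1} forces $b_r=1$ at the root $r$.

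Next I would build $h$ by a postorder sweep of the edges of $\psi$ that carries a set $D\subseteq I_\psi$ of ``coalescent events that have become possible but have not yet been placed'' upward from each edge to its parent. On reaching an edge $i$ with children $c',c''$, the sweep takes the union of the sets returned from $c'$ and $c''$ (a leaf child returns $\emptyset$), adjoins the node $i$ itself (whose coalescence has just become possible, both child clades having been fully processed), then deletes from the resulting set exactly $y_i$ elements, removing at each step a minimal element of the current set and recording $h_v:=i$ for every deleted $v$, and passes the remainder upward. Extensivity of $h$ is immediate because $D$ only ever moves from an edge to its parent and the node adjoined at edge $i$ is $i$, so a node $v$ deleted at $i$ was adjoined at edge $v$ and carried upward, giving $h_v=i\succeq v$. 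Monotonicity holds because we only ever delete minimal elements, so a node $v$ is never placed while a descendant of $v$ is still sitting in $D$. And the fibre over $i$ is by construction the set of nodes deleted at $i$, which has size $y_i$, so $\Phi_{\psi,T_M}(\mathbf h)=\mathbf y$ as soon as one knows every node eventually gets placed.

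The only thing that genuinely needs checking --- and the one place condition~\eqref{cond:ph2} is used --- is that the prescribed deletion of $y_i$ elements at edge $i$ can always be carried out. A short induction up the tree shows that the set passed upward from any edge $i$ has exactly $b_i-1$ elements, the inductive step resting on the additive identities $\ell_i=\ell_{c'}+\ell_{c''}$ and $\sum_{j\preceq i}y_j=y_i+\sum_{j\preceq c'}y_j+\sum_{j\preceq c''}y_j$ over the two children of $i$ (with the convention $b=1$ at a leaf). Granting this, the set available for deletion at edge $i$ has size $(b_{c'}-1)+(b_{c''}-1)+1=y_i+b_i-1\ge y_i$ by~\eqref{cond:ph2}, so the deletion succeeds; and since $b_r=1$ the set passed upward from the root is empty, so no internal node is left unplaced. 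I expect this bookkeeping, together with the elementary observation that one can always peel off $y_i\le|D|$ successive minimal elements from the finite poset $D$ (a subposet of a tree, hence always possessing minimal elements), to be the only mildly delicate part of the argument; the rest follows directly from the definitions of coalescent and population histories.
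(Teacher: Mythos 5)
Your argument is correct, but it takes a genuinely different route from the paper's. The paper proves surjectivity of $\Phi_{\psi,T_M}$ by induction on the number of taxa: it picks a cherry below a node $v$, locates the most recent population $w\succeq v$ with $y_w>0$, assigns the cherry's coalescence to $w$, prunes one cherry leaf and decrements $y_w$, and then carries out the somewhat fiddly verification (a case analysis on whether $w=v$, $w$ is above $v$, or an intermediate node lies between them) that the reduced vector $\tilde{\mathbf y}$ still satisfies the inequalities of Definition \ref{def:phist}, before invoking the induction hypothesis and checking that the extended $\mathbf h$ obeys Definition \ref{def:chist}. You instead build $\mathbf h$ in a single postorder sweep, greedily deleting $y_i$ minimal elements of the pool of available, unplaced coalescences at each edge $i$; the technical heart becomes the counting invariant that the pool passed upward from edge $i$ has $b_i-1$ elements, where $b_i=\ell_i-\sum_{j\preceq i}y_j\ge 1$ is precisely condition (2) of Definition \ref{def:phist}, so the deletions never run out of material, and condition (1) gives $b_{\text{root}}=1$, so every internal node gets placed. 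Your route buys a transparent reading of condition (2) (``at least one lineage leaves every edge'') and avoids modifying $\psi$ and $\mathbf y$ and re-verifying the population-history axioms on a reduced instance; the paper's route stays closer to the cherry-pruning inductions standard in this literature but pays for it with the case analysis on the position of $w$. Your verification of the two coalescent-history constraints is sound: extensivity because the pool only moves upward, and the order constraint because only minimal elements are deleted --- though it is worth spelling out there that for $u$ ancestral to $v$ the assigned populations $h_u,h_v$ both lie on the path from $v$ to the root and are therefore comparable, after which minimality rules out $h_u\prec h_v$.
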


Though the proof of this is somewhat technical, the idea behind it is simple: With a population history $\mathbf y$ fixed, we pick any cherry on $T_M$, and have the coalescent event forming that cherry occur on the edge of $\psi$ as close to the leaves as possible among those allowed by $\mathbf y$. We then show that deleting the cherry from $T_M$ and $\psi$, and the coalescent event from $\mathbf y$ leads to trees and a population history with one fewer taxa, so an inductive argument gives the result.

\begin{proof}[Proof of Lemma \ref{lem:john}]
We must show that if
$\mathbf y\in Y_{\psi}$, then $\PP_{\sigma}(T_M,\mathbf y)>0$.
But
\begin{equation*}
\PP(T_M,\mathbf y)=\sum_{\mathbf h\in \Phi^{-1}_{\psi,T_M}(\mathbf y)} \PP (T_M,\mathbf h),
\end{equation*}
so it suffices to show there is some $\mathbf h\in H_{\psi,T_M}$ with $\Phi_{\psi,T_M}(\mathbf h)=\mathbf y$, since for such an $\mathbf h$,
$\PP_{\sigma}(T_M,\mathbf h)>0.$

We prove this by induction on the number of taxa $n$. The base case of $n=2$ is clear, since there is only one gene tree $T=T_M$,  and one coalescent history $\mathbf h$,  with $\PP_{\sigma}(T_M,\mathbf h)=1.$

Now assume the result is known for $n-1$ taxa. For $n$-taxon trees on taxa $X$,  identify the nodes of the matching gene tree $T_M$ with those of the species tree $\psi$ so that we may use the same notation to refer to either.
Pick an internal node $v$  on $\psi$ that is parental to exactly  two leaves, say $a$ and $b$. 
On both $\psi$ and $T_M$, prune the edge descending from the node  $v$ to leaf $a$, and then suppress that node, to obtain matching $(n-1)$-taxon trees $\tilde \psi$ and $\tilde T_M$ on taxa $X\smallsetminus\{a\}$. We may thus view the node sets of the four trees as satisfying $V_{\tilde \psi}=V_{\tilde T_M}\subset V_{\psi}=V_{T_M}$.

We similarly relate a population history $\mathbf y$ on $\psi$ to a population history $\tilde {\mathbf y}$ on $\tilde \psi$ in the following way: Let $w=w(\mathbf y)$ be the most recent vertex on $\psi$, ancestral or equal to population $v$, labeling a population in which a coalescent event occurs. That is
$$w=\min\{ i \mid \alpha_{iv}=1, \ y_i>0\},$$
where the minimum is taken with respect to the ancestral relationship.
Then let $\tilde {\mathbf y}=(\tilde y_i)_{i\in I_{\tilde \psi}}$ where $\tilde y_i= y_i-\delta(i=w)$.
(In essence, this simply removes one coalescent event from the population above $w$, but in the case where $w=v$, so $y_v=1$, this is done through dropping $y_v$ from $\tilde {\mathbf y}$.)  
 \medskip
 
We next verify that $\tilde {\mathbf y}$ is a population history for $\tilde \psi$, by showing it satisfies the appropriate constraints. Clearly it has non-negative entries.
That condition \eqref{cond:ph1} of Definition \ref{def:phist} is satisfied for $\tilde {\mathbf y}$ is also clear, since we have reduced the sum of the entries in $\mathbf  y$ by 1.

For the inequality constraints of condition \eqref{cond:ph2}, first suppose $w=v$.
Then for all $j\in I_{\tilde\psi}$ we have $\tilde y_{j}=y_{j}$. Thus for $i\in I_{\tilde \psi}$,
\begin{align*}
\sum_{j\in I_{\tilde \psi}}(1-\tilde y_{j})\alpha_{ij} &= \sum_{j\in I_{\tilde \psi}} (1- y_{j}) \alpha_{ij}\\
&= -(1-y_v)\alpha_{iv}+\sum_{j\in I_{ \psi}}(1- y_{j}) \alpha_{ij}.
\end{align*}
But the first term in this last expression is 0, since $y_v=1$, and the second is non-negative because $\mathbf y$ is a population history vector for $\psi$. Thus condition \eqref{cond:ph2} is established in this case.

Now suppose $w$ is ancestral to $v$. Then $\tilde y_{i}=y_{i}$ for all $i\ne w$, while $\tilde y_w=y_w-1$, so
\begin{align*}
\sum_{j\in I_{\tilde \psi}}(1-\tilde y_{j}) \alpha_{ij} &= \alpha_{iw}+ \sum_{j\in I_{\tilde \psi}}(1- y_{j}) \alpha_{ij}\\
&=\alpha_{iw} -(1-y_v)\alpha_{iv}+\sum_{j\in I_{ \psi}}(1- y_{j}) \alpha_{ij}.
\end{align*}
By the minimality of  $w$, we know $y_v=0$. It will follow that the above expression is non-negative in any case when $\alpha_{iw}-\alpha_{iv}\ge 0$.
This is true if either $i$ is ancestral or equal to $w$ (and hence ancestral to v), or $i$ is not ancestral to v (and hence not ancestral to $w$). 

The remaining subcase to consider is when $i$ is ancestral to $v$ but not ancestral or equal to $w$, i.e., $i$ lies between $v$ and $w$. In this situation $\alpha_{iw}-\alpha_{iv}=-1$, so we must show
$$\sum_{j\in I_{ \psi}}(1- y_{j}) \alpha_{ij}\ge 1.$$
But if $i$ has two internal nodes as children, say $k$ and $l$, then
$$\sum_{j\in I_{ \psi}}(1- y_{j}) \alpha_{ij}=(1-y_{i}) +\sum_{j\in I_{\psi}}(1- y_{j}) \alpha_{kj} +\sum_{j\in I_\psi}(1- y_{j}) \alpha_{lj}.$$
The two sums on the right are non-negative, because $\mathbf y$ is a population history vector for $\psi$. Since $y_{i}=0$ by the minimality of $w$, we obtain the needed inequality. The case where $i$ has only one internal node as a child is similar.

This concludes the argument that $\tilde{\mathbf y}$ is a valid population history for $\tilde \psi$.

\smallskip

Since $\tilde{\mathbf y}$ is a population history for $\tilde \psi$, by the induction hypotheses there is a coalescent history $\tilde{\mathbf h}\in H_{\tilde\psi,\tilde T_M}$
with $\Phi_{\tilde \psi,\tilde T_M}(\tilde{\mathbf h})=\tilde{\mathbf y}$.
Define a coalescent history for $T_M$ on $\psi$ by $\mathbf h =(h_i)_{i\in I_\psi}$ with
$h_i= \tilde h_i$ for $i\in \tilde \psi$ and $h_v=w$.

To verify that $\mathbf h\in H_{\psi,T_M}$, we must check that it satisfies the constraints of Definition \ref{def:chist}. 
For a matching tree, condition \eqref{cond:ch1} is equivalent to saying that $h_i$ must be ancestral or equal to $i$. For $i\ne v$, this follows immediately from the fact that $\tilde h_i$ is ancestral to $i$ on $\tilde T_M$. Since $h_v=w$, and $w$ ancestral to $v$,  the constraint is satisfied in all cases.

For condition \eqref{cond:ch2}, we must check that
if $i$ is ancestral to $j$ on $T_M$, then $h_i$ is ancestral or equal to $h_j$ on $\psi$.  For $j\ne v$, this follows immediately from the analogous property for $\tilde h$, but for $j=v$ requires more explanation.

Suppose $i$ is ancestral to $v$ on $T_M$.  
Then $i$ is ancestral to leaf $b$ on $T_M$, hence $i$ is ancestral to leaf $b$ on $\tilde T_M$, so $\tilde h_{i}$ must be ancestral to leaf $b$ on $\tilde \psi$, so $h_i$ is ancestral to leaf $b$ on $\psi$, and hence ancestral or equal to node $v$.
If $w=v$ so $h_v=v$, we are done verifying the constraint. If $w$ is ancestral to $v$, then since $\Phi_{\psi, \tilde T_M}(\tilde {\mathbf h}) =\tilde{\mathbf y}$, the minimality of $w$ ensures no entries of $\tilde {\mathbf h}$ are nodes between leaf $b$ and node $w$ on $\tilde \psi$. Thus $h_i$ does not lie between leaf $b$ and node $w$ on $\psi$. Since
$h_i$ is ancestral to $v$, it must therefore be ancestral or equal to $w=h_v$. 

Finally, observing $\Phi_{\psi,T_M}(\mathbf h)=\mathbf y$ completes the proof.
\end{proof}

\begin{lem}\label{lem:numerator}
If $T$  is a caterpillar gene tree, then for any population history $\mathbf{y}\in Y_\psi$, $\PP_{\sigma}(T,\mathbf{y}) \le \PP_{\sigma}(T_M,\mathbf{y})$.
\end{lem}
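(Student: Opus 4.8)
The plan is to compare $\PP_\sigma(T,\mathbf y)$ and $\PP_\sigma(T_M,\mathbf y)$ by writing each as a sum over in-population rankings (the \emph{instantiations} of coalescent histories mentioned in the paper), and showing the caterpillar's contribution is dominated term-by-term. Fix a population history $\mathbf y \in Y_\psi$. If $\mathbf y \notin Y_{\psi,T}$ the left side is $0$ and there is nothing to prove, so assume $\mathbf y$ is compatible with $T$. Recall that, given $\mathbf y$, the factor of $\PP_\sigma(\cdot,\mathbf y)$ coming from the within-population Kingman process depends only on the numbers $y_i$ of coalescences on each edge and on how many lineages enter that edge — and those lineage counts are determined by $\mathbf y$ alone, not by the gene tree topology. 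Consequently the probabilistic ``weight'' of a single fully-ranked scenario (a choice of which specific merges happen, in which order, on each edge, consistent with $\mathbf y$) is the \emph{same} number $W(\mathbf y)$ regardless of whether the resulting tree is $T$ or $T_M$; what differs is only the \emph{count} of such ranked scenarios that produce the given topology. So the inequality reduces to a purely combinatorial claim: the number of valid fully-ranked realizations yielding the caterpillar $T$ is at most the number yielding $T_M$, for each fixed $\mathbf y$.

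First I would make the ``same weight per ranked scenario'' statement precise, citing Degnan--Salter for the formula $\PP_\sigma(T,\mathbf h)$ in terms of per-edge factors $p_{u_e v_e}(\lambda_e)$ times a product of $1/\binom{k}{2}$ terms and dividing by in-population ranking multiplicities; the point is that once $\mathbf y$ fixes all the $u_e, v_e$, this metric/probabilistic part is $T$-independent, and $\PP_\sigma(T,\mathbf y)$ equals $W(\mathbf y)$ times $N(T,\mathbf y)$, the number of ranked coalescent scenarios compatible with both $T$ and $\mathbf y$. Then I would construct an injection from the set of ranked scenarios realizing $(T,\mathbf y)$ into the set realizing $(T_M,\mathbf y)$. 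The natural idea: a ranked scenario is, edge by edge, a sequence of merges on the $\psi$-forest; $T_M$ is the ``most flexible'' target because (as Lemma~\ref{lem:john} already shows at the level of population histories) on $T_M$ one can always merge whichever pair one likes next, i.e. every sequence of merges consistent with the lineage counts yields \emph{some} valid ranked history for $T_M$. So for $T_M$, $N(T_M,\mathbf y)$ is the \emph{total} number of ways to partition the $n-1$ merges across edges according to $\mathbf y$ and order them within each edge, times the number of ways to pick which lineages merge — subject only to the entering-lineage bookkeeping. For a general $T$ each merge is forced to involve a prescribed pair of current clades, so $N(T,\mathbf y)$ counts only the orderings, not the choices, and hence $N(T,\mathbf y) \le N(T_M,\mathbf y)$.

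The cleanest route is probably induction on $n$, mirroring the proof of Lemma~\ref{lem:john}: peel off the cherry $\{a,b\}$ of $T$ (a caterpillar has a unique cherry, attached to the longest path), let $w = w(\mathbf y)$ be the lowest ancestor of $v$ with $y_w > 0$, and split the ranked scenarios realizing $(T,\mathbf y)$ according to the position, within population $w$, of the merge of $a$ with $b$'s lineage. Deleting that merge gives a ranked scenario for $(\tilde T, \tilde{\mathbf y})$ on $n-1$ taxa, where $\tilde T$ is $T$ with the cherry contracted — still a caterpillar (or a $2$-leaf tree) — and $\tilde{\mathbf y}$ is the reduced population history from Lemma~\ref{lem:john}. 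The same peeling applied to $T_M$ with the \emph{same} choice of $v$ (legitimate since a caterpillar's cherry is a cherry of $T_M$ too — wait, here one must be careful: the cherry of $T$ need not be a cherry of $\psi$). This is the main obstacle: aligning the recursion for $T$ with a recursion for $T_M$ when the cherry of $T$ sits somewhere arbitrary in $\psi$. I would handle it by \emph{not} insisting the same leaves are peeled from both trees; instead, peel a cherry of $\psi$ (hence of $T_M$) for the $T_M$ side exactly as in Lemma~\ref{lem:john}, peel the cherry of $T$ for the $T$ side, and carry the induction on the reduced \emph{population history} $\tilde{\mathbf y}$, which is the same object on both sides because Lemma~\ref{lem:john}'s reduction is defined purely in terms of $\mathbf y$ and $\psi$. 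The bound $N(T,\mathbf y) \le N(T_M,\mathbf y)$ then follows because: (i) the number of slots in population $w$ available for the cherry merge is the same on both sides (it depends only on $\mathbf y$), (ii) on the $T$ side that merge is forced to be ``$a$ with $b$'' whereas on the $T_M$ side there may be several admissible pairs, giving extra multiplicity, and (iii) the reduced counts satisfy $N(\tilde T,\tilde{\mathbf y}) \le N(\tilde T_M,\tilde{\mathbf y})$ by the inductive hypothesis together with Lemma~\ref{lem:john} guaranteeing $\tilde{\mathbf y} \in Y_{\tilde\psi}$ so the right side is genuinely a probability of a compatible pair. Multiplying through by the common per-scenario weight $W(\mathbf y)$ yields $\PP_\sigma(T,\mathbf y) \le \PP_\sigma(T_M,\mathbf y)$, as desired.
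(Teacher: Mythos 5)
Your first step---factoring $\PP_\sigma(\cdot,\mathbf y)$, for fixed $\mathbf y$, into a weight $W(\mathbf y)=f(\mathbf y,\boldsymbol{\lambda})$ that is the same for every coalescent history with in-population rankings, times the count $N(\cdot,\mathbf y)$ of such ranked scenarios---is correct and is exactly the paper's factorization. But the combinatorial half of your argument has genuine gaps. First, your justification that $N(T_M,\mathbf y)$ equals the \emph{total} number of ranked merge sequences consistent with the lineage bookkeeping misreads Lemma \ref{lem:john}: that lemma says \emph{some} ranked scenario consistent with $\mathbf y$ produces $T_M$, not that every such scenario does (with three lineages $a,b,c$ entering the root population and $T_M=((a,b),c)$, merging $a$ with $c$ first respects all lineage counts but yields a different topology). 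Second, your intermediate assertion that for a \emph{general} $T$ each merge is forced to a prescribed pair and hence $N(T,\mathbf y)\le N(T_M,\mathbf y)$ cannot stand: if that inequality held for every gene tree and every $\mathbf y$, then summing against $W(\mathbf y)>0$ would give $\PP_\sigma(T)\le\PP_\sigma(T_M)$ for all $T$, contradicting the existence of anomalous gene trees; the inequality is genuinely special to caterpillars. Third, the induction you sketch does not align: on the $T$ side you delete the cherry leaf $a$ of $T$ and remove a coalescence from whatever population carries $T$'s first merge, while on the $T_M$ side you delete a (generally different) leaf belonging to a cherry of $\psi$ and decrement the population $w(\mathbf y)$ of Lemma \ref{lem:john}. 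The two reduced species trees then have different leaf sets and the two reduced population histories are indexed by different node sets, so they are not ``the same object,'' and the claimed step $N(\tilde T,\tilde{\mathbf y})\le N(\tilde T_M,\tilde{\mathbf y})$ is not an instance of your inductive hypothesis.

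The missing observation, which makes the comparison immediate and is how the paper concludes, is that a caterpillar's internal nodes are totally ordered by ancestry. Hence, once $\mathbf y$ prescribes how many coalescences occur in each population, both the assignment of $T$'s nodes to populations and their ordering within each population are completely forced, so $N(T,\mathbf y)\le 1$, with equality exactly when $\mathbf y$ is compatible with $T$. On the other side, Lemma \ref{lem:john} guarantees $N(T_M,\mathbf y)\ge 1$ for every $\mathbf y\in Y_\psi$. Multiplying both counts by the common factor $f(\mathbf y,\boldsymbol{\lambda})$ yields $\PP_\sigma(T,\mathbf y)\le\PP_\sigma(T_M,\mathbf y)$ directly; no injection between ranked scenarios and no induction on $n$ is needed. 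I recommend you isolate and prove the statement ``a caterpillar admits at most one coalescent history with in-population rankings per population history'' and then discard the second and third paragraphs of your argument.
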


\begin{proof}
From Lemma \ref{lem:john}, if $\PP_{\sigma}(T,\mathbf{y})>0$, then $\PP_{\sigma}(T_M,\mathbf{y})>0$ as well, so $T_M$ can be realized with $\mathbf y$. 

Now each of these probabilities can be expressed as a product of two terms: one which depends only on $T$ and $\mathbf y$, and one which  depends only on $\mathbf y$ and $\boldsymbol \lambda$ . More specifically,
\begin{align}\PP_{\sigma}(T,\mathbf{y})&=R_{T,\mathbf y} f(\mathbf y,\boldsymbol \lambda),\notag\\
\PP_{\sigma}(T_M,\mathbf{y})&=R_{T_M,\mathbf y} f(\mathbf y,\boldsymbol \lambda),\label{eq:factor}
\end{align}
where $R_{T,\mathbf y}$ counts the number of coalescent histories with in-population rankings consistent with the gene tree $T$ and $\mathbf y$, and
$$f(\mathbf y,\boldsymbol \lambda)=\prod_{i=1}^{n-1}
\frac 1{d_{j_ik_i}}
 g_{j_ik_i}(\lambda_i),$$ with $j_i=\ell_i-(\sum_{j\ne i} y_j \alpha_{ij})$ the number of lineages `entering' population $i$ from below 
and $k_i=j_i-y_i$ the number of lineages `leaving' population $i$ above, $d_{jk}$ the number of sequences of coalescent events that may occur for $j$ labeled entering lineages to coalesce to $k$ leaving lineages, and $g_{jk}(u)$ is the function which gives the probabilities that $j$ lineages in a population coalesce to $k$ lineages in $u$ coalescent units \citep{degnan2005,rosenberg2003,tavare1984,wakeley2008}.

Since $T$ is a caterpillar, its realization requires a specific ranked ordering to coalescent events, so $R_{T,\mathbf y}=1$.
Since $R_{T_M,\mathbf y}\ge 1$ the lemma follows from equations \eqref{eq:factor}.
\end{proof}

\begin{lem}\label{lem:minpophist}
The population history $\mathbf  1=(y_i)_{i\in I_\psi}$  with all $y_i=1$ is consistent with the matching gene tree $T_M$, but no other.
That is,
$\mathbf 1\in Y_{\psi,T}$ if and only if $T=T_M$.
\end{lem}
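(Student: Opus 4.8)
The plan is to show that the population history $\mathbf 1$ (exactly one coalescent event in every population of $\psi$) forces, when realized, a unique way in which lineages merge, and that this forced merging reproduces exactly the topology of $\psi$, i.e.\ the matching gene tree $T_M$. One direction is already essentially available: since $\mathbf 1 \in Y_\psi$ (its entries sum to $n-1$ and the inequality constraints are checked directly, as each $\sum_{j} y_j\alpha_{ij} = \ell_i - 1 < \ell_i$), Lemma \ref{lem:john} gives $\mathbf 1 \in Y_{\psi,T_M}$, so $\mathbf 1$ is compatible with $T_M$. It remains to prove that if $\mathbf 1 \in Y_{\psi,T}$ then $T = T_M$.

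For the forward direction, suppose $\mathbf 1$ is compatible with $T$, so there is a coalescent history $\mathbf h \in H_{\psi,T}$ with $\Phi_{\psi,T}(\mathbf h) = \mathbf 1$; that is, $\mathbf h$ is a bijection from $I_T$ to $I_\psi$. First I would observe a lineage-counting fact: if exactly one coalescence happens in each population, then for every population $i$ of $\psi$ the number of lineages entering it from below equals $\ell_i$ (all $\ell_i$ leaves below $i$ are still distinct lineages when they enter $i$) and the number leaving above is $\ell_i - 1$. In particular, for a cherry $\{a,b\}$ of $\psi$ hanging off an internal node $v$, the single coalescent event in population $v$ has only two lineages available, namely $a$ and $b$, so it must be the coalescence of $a$ with $b$. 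Hence $T$ displays the cherry $\{a,b\}$, and the gene-tree node formed there maps under $\mathbf h$ to $v$.

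Now I would run an induction on $n = |X|$ analogous to the one in Lemma \ref{lem:john}. Pick a cherry $\{a,b\}$ of $\psi$ at node $v$. By the previous paragraph $T$ displays $\{a,b\}$; prune leaf $a$ from both $\psi$ and $T$ and suppress the resulting degree-two nodes to get $\tilde\psi$ and $\tilde T$ on $X\smallsetminus\{a\}$. Deleting the entry $h_v = v$ (equivalently, the coalescence in population $v$) from $\mathbf h$ yields a coalescent history $\tilde{\mathbf h}$ for $\tilde T$ on $\tilde\psi$ whose image under $\Phi_{\tilde\psi,\tilde T}$ is the all-ones vector on $I_{\tilde\psi}$ — here I must check that removing the cherry does not change the coalescent count in any surviving population, which holds precisely because lineage $a$ and lineage $b$ contributed only a single merge and that merge sat in population $v$, which is being removed. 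So $\mathbf 1_{\tilde\psi} \in Y_{\tilde\psi,\tilde T}$, and by the induction hypothesis $\tilde T = \widetilde{T_M}$, the matching tree for $\tilde\psi$. Reattaching leaf $a$ to the edge of $\tilde T = \widetilde{T_M}$ at the position corresponding to $v$ recovers $T$ on one side and $T_M$ on the other, so $T = T_M$. The base case $n=2$ is immediate.

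The main obstacle — and the step I would be most careful about — is the bookkeeping that passing to $\tilde\psi$, $\tilde T$, and the reduced coalescent history genuinely yields the all-ones population history on the smaller tree, together with the claim that "the single event in a cherry population must merge the two cherry leaves." That claim rests on the lineage-count identity $j_i = \ell_i$ under $\mathbf 1$, which in turn needs the observation that with only one merge per population no sub-population below $i$ can have reduced its lineage count below the number of leaves it subtends; this is exactly the content of condition \eqref{cond:ph2} being tight for $\mathbf 1$. Once that identity is in hand the induction is routine and parallels the structure already used for Lemma \ref{lem:john}.
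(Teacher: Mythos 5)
Your overall route is sound but genuinely different from the paper's: you argue the forward direction by induction on $n$, pruning a cherry of $\psi$ in the style of the proof of Lemma \ref{lem:john}, whereas the paper gives a short, non-inductive argument. The paper notes that under $\mathbf 1$ there are exactly $\ell_i-1$ coalescent events on or below the edge above each internal node $i$ of $\psi$, and these events can involve only lineages sampled from the $\ell_i$ taxa below $i$; hence those lineages must have fully coalesced by the top of that edge, so every clade of $\psi$ is a clade of $T$, and two binary trees with the same clades are equal. Your induction --- the unique event in the population above a cherry node $v$ must merge the two cherry taxa, so $T$ displays that cherry; deleting the $v$-entry of $\mathbf h$ and pruning leaf $a$ yields a coalescent history for $\tilde T$ on $\tilde\psi$ mapping to the all-ones vector, and the inductive hypothesis plus reattachment of $a$ as the sibling of $b$ gives $T=T_M$ --- does go through (using $Y_{\psi,T}=\Phi_{\psi,T}(H_{\psi,T})$ for the reduction, as you do), but it is longer than necessary.

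One concrete error needs fixing: the ``lineage-counting fact'' you state, and explicitly cite in your final paragraph as the foundation of the key claim, is false. Under $\mathbf 1$ the number of lineages entering population $i$ is $j_i=\ell_i-\sum_{j\ne i}y_j\alpha_{ij}=\ell_i-(\ell_i-2)=2$, not $\ell_i$, and the number leaving is $k_i=1$, not $\ell_i-1$; tightness of condition \eqref{cond:ph2} for $\mathbf 1$ yields $k_i=1$, not $j_i=\ell_i$. Your proof survives only because the sole instance you use is a cherry node $v$, where $\ell_v=2$ makes the identity trivially true; and in fact the cherry claim needs no lineage counting at all --- by condition \eqref{cond:ch1} of Definition \ref{def:chist}, the node of $T$ assigned to population $v$ has leaf descendants contained in $\{a,b\}$, hence equal to $\{a,b\}$, so it is the cherry. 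Note that the corrected tightness statement, $k_i=1$ for every $i$, is precisely what drives the paper's direct clade argument, so repairing this misstatement essentially hands you the shorter proof and renders the induction unnecessary.
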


\begin{proof}
That $\mathbf 1\in Y_{\psi,T_M}$ is a consequence of Lemma \ref{lem:john}.

Suppose $\mathbf 1\in Y_{\psi,T}$. To establish that $T=T_M$ it is enough to show these gene trees must have the same clades \citep{semple2003}.
 Since $\PP_\sigma(T,\mathbf 1)>0$, $T$ is realizable with one coalescent event on each internal edge of $\psi$. But for any $i$, there are $\ell_i$ taxa descended from node $i$ of $\psi$, and
population history $\mathbf 1$ implies $\ell_i-1$ coalescent events occur on or below the edge above $i$. Thus for both $T$ and
$\mathbf y=\mathbf 1$ to be simultaneously realized,  the lineages of all taxa descended from  $i$ on $\psi$ must coalesce to form a clade on $T$. Thus every clade of $\psi$ is a clade on $T$, so $T=T_M$.
\end{proof}


\medskip

\begin{proof}[Proof of Theorem \ref{T:MainTheorem}]  
From the lemmas,
\begin{align*}
\PP_\sigma(T) &= \sum_{\mathbf y \in Y_{\psi,T} }\PP_\sigma (T,\mathbf y)
\le \sum_{\mathbf y \in Y_{\psi,T}} \PP_\sigma(T_M,\mathbf y)
< \sum_{ \mathbf y  \in Y_{\psi,T_M}} \PP_\sigma(T_M,\mathbf y)
= \PP_\sigma(T_M).
\end{align*}
The first equality is from Lemma \ref{lem:pop_tot}; the next inequality from Lemma \ref{lem:numerator}; the next from Lemmas \ref{lem:john} and
\ref{lem:minpophist}; and the final equality from Lemma \ref{lem:pop_tot} again.

\end{proof}

\begin{rem} \label{rem:hist}
Let $\sigma=(\psi,\lambda)$ be a species tree on taxa $X$, and let $T$ be any nonmatching caterpillar gene tree on $X$.   
Then the above considerations show
\begin{equation}\label{E:nonmatching}
|H_{\psi,T}| = |Y_{\psi,T}| < |Y_{\psi,T_M}| \le |H_{\psi,T_M}|,
\end{equation}
i.e,  the number of consistent coalescent histories is larger for matching trees than for any nonmatching caterpillar tree.  It has previously been shown for some species trees  that the number of coalescent histories can be larger for 
a nonmatching, noncaterpillar tree than for a matching tree, although the smallest trees for which this occurs have 7 taxa \citep{rosenberg2010}.    Equation \eqref{E:nonmatching} shows that gene trees with more coalescent histories than the matching tree are never caterpillars, which presents a combinatorial analog to the result that caterpillar gene trees can never be AGTs. 
\end{rem}

\section{ Anomalous Ranked and Unrooted Gene Trees}

Recently, the concept of anomalous gene trees has also been extended to ranked gene trees \citep{degnan2012b,disanto2014} and unrooted gene trees \citep{degnan2013}.  

A ranked gene tree topology encodes the relative timing of the branches, so that, as an example, the ranked gene tree topologies in Figure 1(A) and (B) are distinct because the ordering of the $(a,b)$ and $(c,d)$ coalescences are reversed, even though the unranked gene tree topologies are the same.  An {\it anomalous ranked gene tree} (ARGT) is a ranked gene tree that is more probable than the ranked gene tree that matches the ranked species tree \citep{degnan2012b}.  The ranked gene tree in Figure 1(A) matches the ranked species tree, while the ranked gene tree in Figure 1(B) does not.  In spite of the results of this paper, caterpillar gene trees {\it can} be ARGTs \citep{degnan2012b}, i.e., a caterpillar gene tree can be more probable than a matching ranked gene tree, even though it must be less probable than the matching unranked gene tree.

On the other hand, neither caterpillar nor pseudo-caterpillar species trees have ARGTs \citep{degnan2012b}. (A pseudo-caterpillar tree is one obtained from a caterpillar by attaching two edges to each leaf in the caterpillar's cherry.  The species trees in Figure 1 are pseudo-caterpillars.) Therefore, extending the concept of a wicked forest to ranked gene trees, there are no caterpillars or pseudocaterpillars in a {\it wicked forest for ranked gene trees} (a nonempty set $W$ of distinct species trees where the ranked topology each member is an ARGT for all other members).   An example of a wicked forest for ranked gene trees using 10 taxa is shown in Figure 3.  The smallest number of taxa needed for a wicked ranked forest is unknown.

\begin{figure*}
\begin{center}
\includegraphics[width=.48\textwidth,angle=270]{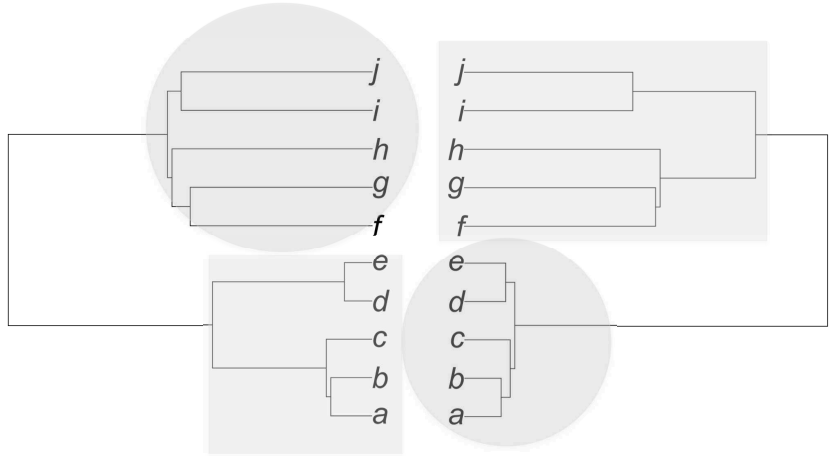}
\caption{A ranked wicked forest.  The two trees have the same unranked topology but have different rankings since, for example, $(d,e)$ is the most recent common ancestor for the left tree, while $(a,b)$ has the most recent common ancestor in the right tree.  Subtrees in rectangular shaded boxes have ARGTs, shown in corresponding circular shaded regions in the facing tree. For the subtrees in rectangular regions, there are relatively long branches separating the two- and three-taxon clades. Subtrees that are not in boxes have low probability for any particular sequence of coalescences because all branches are short.  These subtrees have short branches separating two- and three-taxon clades.}
\end{center}
\end{figure*}

The gene tree probabilities for Figure 3 are most easily approximated by assuming that the branches between the root and the shaded regions are very long, so that coalescence of all available lineages is virtually guaranteed on these branches.  Then probabilities for the left and right shaded subtrees can be obtained using formulas from \citet{degnan2012}. 
For the tree on the left in Figure 3, let the subtree in the rectangular box be $\sigma_{  \text{  \scalebox{0.6}{$\square$}}}^1$, and the tree in the circular shaded region be $\sigma_{\circ}^1$, so that the overall species tree is $\sigma_L = (\sigma_{  \text{  \scalebox{0.6}{$\square$}}}^1{:}
\lambda_1,\sigma_{\circ}^1{:}\lambda_2)$, where $\lambda_1$ and $\lambda_2$ are very large.  Similarly, the tree on the right of Figure 3 is $ \sigma_R = (\sigma_{  \text{  \scalebox{0.6}{$\square$}}}^2{:}
\lambda_3,\sigma_{\circ}^2{:}\lambda_4)$.  Here $\sigma_{  \text{  \scalebox{0.6}{$\square$}}}^1$ and $\sigma_{\circ}^2$ are species trees on $X_1 = \{a,b,c,d,e\}$ and $\sigma_{\circ}^1$ and $\sigma_{  \text{  \scalebox{0.5}{$\square$}}}^2$ are species trees on $X_2 = \{f,g,h,i,j\}$.  We let $T_{  \text{  \scalebox{0.6}{$\square$}}}^i$ and $T_{\circ}^i$ be the matching ranked gene trees for $\sigma_{  \text{  \scalebox{0.6}{$\square$}}}^i$ and $\sigma_{\circ}^i$, respectively.    Let $T_L$ and $T_R$ denote the matching ranked gene trees for the left and right trees, respectively.

From \citet{degnan2012}, branch lengths can be chosen so that  if  $\sigma_L$ is the species tree, then with probability arbitrarily close to 2/8, the ranked gene tree restricted to taxa $X_1$ is $T_{  \text{  \scalebox{0.6}{$\square$}}}^1$, and with probability arbitrarily close to 3/8 is $T_{\circ}^2$.    Branch lengths can also be chosen so that for taxa $X_2$, the ranked gene tree restricted to taxa $X_2$ has nearly equal probability of being either $T_{\circ}^1$ or $T_{  \text{  \scalebox{0.6}{$\square$}}}^2$.  Therefore, for some choices of branch lengths, 
$$\frac{\mathbb P_{\sigma_L}(T_R)}{\mathbb P_{\sigma_L}(T_L)} \approx \frac{3}{2}.$$
Similar arguments show that $T_L$ can be approximately 1.5 times as probable as $T_R$ when $\sigma_R$ is the species tree. In this example, the wicked forest contains two species trees with identical unranked topologies but different ranked topologies.
Examples of wicked forests for ranked gene trees that contains trees with different topologies can also be constructed.  For example, one could swap taxa $b$ and $c$ in $\sigma_L$ but not $\sigma_R$ and still obtain a wicked forest for ranked gene trees.

\smallskip

Probabilities of unrooted gene trees can be obtained by summing over the probabilities of all rooted gene trees with the same unrooted topology.  An unrooted caterpillar tree is a binary tree where every internal node is connected by an edge to a leaf node.
Unrooted caterpillar gene trees can be  anomalous unrooted gene trees (AUGTs), i.e., more probable than the unrooted gene tree with the same unrooted topology as that of the species tree.  Figure \ref{fg:Uwicked} shows a {\it wicked forest for unrooted gene trees}, which we define as a nonempty set $W$ of rooted species trees such that for $\sigma_i, \sigma_j \in W$, $P_{\sigma_i}(u(T_j)) > P_{\sigma_i}(u(T_i))$ for $i \ne j$, where $u(T_i)$ is the unrooted topology of $T_i$, and $T_i$ has the same rooted topology and $\sigma_i$. This example shows that caterpillars can be in a wicked forest for unrooted trees.

\begin{figure*}
\begin{center}
\includegraphics[width=.48\textwidth,angle=270]{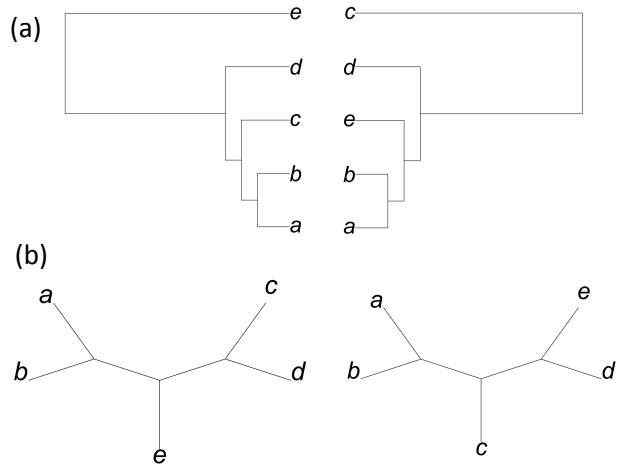}
\caption{(a) Two rooted caterpillar species trees constituting a wicked forest for unrooted gene trees.  For these species trees, the two shorter internal branches have branch length 0.05 coalescent units while the longer one has length 0.5 coalescent units. (b) The unrooted gene tree on the left is the most probable unrooted gene tree given the species tree on the left in (a).  The unrooted gene tree on the right in (b) is the most probable unrooted gene tree given the species tree on the right in (a).}\label{fg:Uwicked}
\end{center}
\end{figure*}

\section{Future work on AGTs}

The fact that caterpillar gene trees cannot be AGTs  fits the intuition that AGTs are more easily found among gene trees with more balanced topology than the species tree \citep{degnan2013,rosenberg2013b}.  For
unbalanced species trees, choosing sufficiently short branch lengths gives gene trees with a higher amount of tree balance greater probability \citep{degnan2006}.  However, the fact that even perfectly balanced species trees can have AGTs \citep{degnan2006} suggests that it is difficult to characterize all AGTs.  Thus, there is still an open question:  for a given species tree topology, which gene tree topologies can be AGTs?

The strategy of Degnan~(2013) can be used to predict many of the AGTs for a given species tree: First one considers a smaller species tree induced by taking a subset of taxa.  If this smaller tree has AGTs, then ones for the larger tree can be predicted by re-grafting the removed taxa onto the smaller AGTs. 
As an example, for the  species tree  $(((a,b),(c,d)),e)$, called a {\it pseudo-caterpillar} by \citet{rosenberg2007},
removing taxon  $c$ results in the caterpillar $(((a,b),d),e)$, which can have AGTs $((a,b),(d,e))$, $((a,d),(b,e))$, and $((a,e),(b,d))$.  Placing $c$ back on these AGTs results in $((a,b),((c,d),e)$, $((a,(c,d)),(b,e))$ and $((a,e),(b,(c,d)))$. While this perhaps suggests that the 5-taxon pseudo-caterpillar species tree cannot have a pseudo-caterpillar AGT, the verification of that fact currently depends on a detailed calculation of gene tree probabilities \citep{rosenberg2008}.  

While it would be desirable to have an efficient way of determining which topologies can be AGTs for a given species tree, potentially more valuable would be methods for determining the set of species trees for which a given gene tree can be most probable.  Such candidate species trees could then be used to reduce the search space for the optimal species trees to explain a set of gene trees \citep{fan2011}. 

\medskip

Further results on AGTs may also be helpful in interpreting results of species trees inference by concatenation of gene sequences.  In particularly, simulations \citep{kubatko2007,degdeg2010} as well as theoretical results \citep{roch2015} have shown that when maximum likelihood is used to infer a tree based on concatenated DNA sequences, the inferred tree can be misleading, in the sense that concatenating more genes  can be more likely to lead to an erroneous inferred species tree.  In simulations where concatenation has been misleading, the returned tree is often an AGT.  Simulations also suggest that concatenation performs better when the true species tree is balanced \citep{leache2011}, and thus AGTs are less common \citep{degnan2006,rosenberg2008,degnan2013}. Studies are needed to determine whether in larger trees inferred from empirical data, certain tree shapes inferred from concatenation tend to be more reliable than others.

\section*{Acknowledgments}

This research was begun with support for the second author provided by an Erskine Fellowship at the University of Canterbury.

\bibliography{bibfile}

\end{document}